\documentclass{llncs}

\widowpenalty10000
\clubpenalty10000

\usepackage{cite}
\usepackage{amssymb}
\usepackage{amsfonts}
\usepackage{amsmath}
\usepackage{tikz}
\usetikzlibrary{calc}
\usepackage{enumitem}
\usepackage{xspace}
\usepackage{todonotes}

\newcommand{\probDef}[3]{
  \begin{quote}
   \textsc{#1}\\
  \textbf{Input:} #2\\
  \textbf{Question:} #3
  \end{quote}
}

\newcommand{\score}{\mathrm{score}}

\newcommand{\para}[1]{\medskip\noindent\textbf{#1}\quad}
\newcommand{\paranoskip}[1]{\noindent\textbf{#1}\quad}

\usepackage{thmtools}
\usepackage{thm-restate}

\begin{document}

\pagestyle{plain}

\title{Balanced Facilities on Random Graphs}
\titlerunning{Balanced Facilities on Random Graphs}

\author{Roee David, Nimrod Talmon}
\authorrunning{R. David and N. Talmon}
\institute{Weizmann Institute of Science \\
\email{roee.david@weizmann.ac.il, nimrodtalmon77@gmail.com}
}

\maketitle

\begin{abstract}
Given a graph $G$ with $n$ vertices and $k$ players,
each of which is placing a facility on one of the vertices of $G$,
we define the score of the $i$th player to be the number of vertices for which,
among all players, the facility placed by the $i$th player is the closest.
A placement is \emph{balanced} if all players get roughly the same score.
A graph is \emph{balanced} if all placements on it are balanced.
Viewing balancedness as a desired property in various scenarios,
in this paper we study balancedness properties of graphs,
concentrating on random graphs and on expanders.
We show that,
while both random graphs and expanders tend to have good balancedness properties,
random graphs are, in general, more balanced.
In addition,
we formulate and prove intractability of the combinatorial problem of deciding whether a given graph is balanced;
then,
building upon our analysis on random graphs and expanders,
we devise two efficient algorithms which,
with high probability,
generate balancedness certificates.
Our first algorithm is based on graph traversal,
while the other relies on spectral properties.
\end{abstract}

\section{Introduction}\label{section:introduction}

Consider a game played by $k$ players on some graph $G$.
The players place facilities on vertices of $G$ such that each player places one facility.
For each player we define a \emph{score},
defined as the number of vertices which,
among all other facilities, are closest to his or her facility;
ties are broken evenly,
such that if there are $z$ facilities closest to a vertex, then this vertex incurs a score increase of $1 / z$ to each of these facilities.
Such games are subject to extensive research;
some prominent study areas are
\emph{Voronoi games on graphs}~\cite{bandyapadhyay2015voronoi,durr2007nash,mavronicolas2008voronoi,teramoto2006voronoi}
and \emph{Competitive facility location games}~\cite{Friesz2007,saban2012competitive,banik2016discrete,banik2013optimal,ahn2004competitive}, where the players try to maximize their score.

In this paper,
however,
we concentrate on balancedness properties of such games,
thus consider having the score of the players be as close to each other as possible to be desired.
Indeed,
in some sense,
in this paper we take the point of view of the network designer
by studying balancedness properties of certain graphs.
To this end,
we say that a placement of $k$ facilities on a graph is \emph{$z$-balanced}
if all facilities get roughly the same score;
specifically,
a placement of $k$ facilities is \emph{$z$-balanced} if the score of each facility is at least $\lfloor n / k \rfloor - z$ and at most $\lceil n / k \rceil + z$.
We say further that a graph is \emph{$z$-balanced} if all placements on it are $z$-balanced.
A more formal definition is given in Section~\ref{section:preliminaries}.

\pagebreak

Graph balancedness,
besides being a natural and an interesting graph property from a combinatorial point of view,
is motivated by certain scenarios, two of which we briefly mention next.
As a first example,
consider a computer network to be built;
the network acts as the graph upon facilities, such as computer servers might be built.
It is of interest to have a network with good balancedness properties,
so that it will remain fair and efficient where such servers would be employed on top of it.
As a second example,
consider a design of a city to be built;
the city topology and, for instance, its roads,
act as the graph upon facilities, such as hospitals and child-care centers might be built.
It is of interest to have a city with good balancedness properties,
so it would be able to accommodate the needs of its future residents.
Indeed,
a city with bad balancedness properties might eventually become unpleasant and socially inferior.
Thus,
we believe that it is worthwhile to study balancedness properties of graphs,
as well as algorithms for verifying whether a given graph is balanced.

Some research has been done on balancedness of facilities in graphs,
including work on designing practical algorithms for finding balanced allocations~\cite{marin2011discrete}
and work considering balancedness in a perculation-like model~\cite{baroni2015fixed,van2015fixed}.
Other, different notions of balancedness in facility location games have been studied as well~\cite{honiden2009balancing}.
For an elaborate discussion on balancedness notions in facility location games, see~\cite{marsh1994equity}.

In this paper we analyze balancedness properties of certain graphs,
seeking to identify graphs which are balanced.
Specifically,
we concentrate on random graphs and also on expanders,
showing that these graphs usually have good balancedness properties.
Then,
building upon our analysis on random graphs and expanders,
we provide efficient algorithms for verifying whether a given graph is balanced.

\para{Initial Observations.}
As one of our goals is to identify graphs which have good balancedness properties,
let us identify certain such graphs.
As first examples,
observe that complete graphs and empty graphs are $0$-balanced for any number of players $k$;
indeed,
the score of each player is exactly $n / k$, for any placement of $k$ facilities on such graphs.
Both complete graphs and empty graphs are vertex-transitive graphs~\cite{godsil2013algebraic},
and we mention that any vertex-transitive graph is $0$-balanced for two players;
further,
a natural generalization of vertex-transitivity to sets of $k$ players yields graphs which are $0$-balanced also for $k$ players.

Naturally,
however,
not all graphs have good balancedness properties.
For example,
consider two players playing on the path graph $P_n$,
which is the graph with vertices $V = \{v_1, \ldots, v_n\}$ and edges $E = \{ \{ v_i, v_{i + 1} \} : i \in [n - 1]\}$.
Some placements of two facilities on the path graph $P_n$ are balanced,
for example where one facility is placed on $v_{n / 2}$ and the other facility is placed on $v_{n / 2 + 1}$:
  for $n \in \mathbb{N}_{\text{even}}$, such a placement is $0$-balanced,
  while for $n \in \mathbb{N}_{\text{odd}}$, such a placement is only $1$-balanced.
Some placements of two facilities on the path graph $P_n$,
however,
are not balanced,
for example where one facility is placed on $v_1$ and the other facility is placed on $v_2$:
  one player would have a score of $1$ while the other would have a score of $n - 1$.
This means that path graphs are not ($n - 2$)-balanced for two players,
which is, considering balancedness, ``the worst it can get''.

\para{Overview of the Paper.}
Preliminaries are provided in Section~\ref{section:preliminaries}.
Then,
motivated by our desire to identify graphs which are balanced and to further understand which factors influence graph balancedness,
in Section~\ref{section:randomgraphs} we consider random graphs and study their balancedness properties.
We show, in Theorem~\ref{thm:random graphs are balanced}, that random graphs have good balancedness properties.
Inspecting our proof of Theorem~\ref{thm:random graphs are balanced},
it looks as if what causes random graphs to be balanced is the fact that they are well-connected in a uniform way;
well-behaved (in the above-mentioned manner) graphs are usually referred to as expander graphs,
and thus,
in Section~\ref{section:expanders} we consider balancedness properties of expander graphs.
Specifically, we consider spectral expander graphs (for a precise definition, see Section~\ref{section:preliminaries}),
and mention that, with high probability, a random graph is a spectral expander
(see~\cite{feige2005spectral,furedi1981eigenvalues,friedman1989second}, e.g., for a proof of this fact).
In Theorem~\ref{thm:expanders are balanced},
we show that expanders have good balancedness properties.

Somewhat surprisingly,
though,
it turns out that,
with respect to their balancedness,
expanders are inferior to random graphs;
in particular,
for some values of the average degree of these graphs,
some random graphs are balanced while some expanders are not:
  in Theorem~\ref{thm:expander which is not balanced} we show an example for such expander which is not balanced.
This means that,
even though the expansion of random graphs influences their balancedness,
it is not sufficient,
and the inherent randomness of these graphs is also important for their balancedness.

In Section~\ref{section:algorithms} we consider the algorithmic problem of deciding whether a given graph is balanced.
We begin that section by proving that the corresponding combinatorial problem is intractable.
Then, building upon the analysis described in Sections~\ref{section:randomgraphs} (for random graphs) and~\ref{section:expanders} (for expander graphs),
we describe,
in Sections~\ref{section:algorithmone} and~\ref{section:algorithmtwo},
two efficient algorithms which,
given a graph,
provide a \emph{balancedness certificate}:
  in Section~\ref{section:algorithmone}
  we discuss an algorithm,
  based on graph traversal,
  which produces, in $O(n^2)$ time, a certificate that a graph is balanced;
  for random graphs, it produces such a certificate with high probability.
In Section~\ref{section:algorithmtwo}
we discuss a different algorithm,
based on spectral analysis,
which produces, in $O(d \cdot n \log n)$ (where $d$ is the average degree), a randomized certificate that a graph is balanced;
for random graphs, it produces such a certificate with high probability.
We conclude the paper in Section~\ref{section:outlook} with a discussion on directions for future research.

\section{Preliminaries}\label{section:preliminaries}

\paranoskip{General preliminaries.}
For $n \in \mathbb{N}$, we denote the set $\{1, \ldots, n\}$ by $[n]$.
Given a vector $\vec{v}\in\mathbb{R}^{n}$ we denote the $i$th coordinate of $\vec{v}$ by $(\vec{v})_{i}$.

\para{Graph theory and neighborhoods.}
Given a graph $G$ we denote by $V$ the vertex set of the graph and by $E$ the edge set of the graph.
For a set of vertices $S \subseteq V$,
we denote its complement by $\bar{S} := \{v \in V(G) : v \notin S\}$.
We denote the degree of a vertex $v \in V$ by $d_{v}$.
A \emph{$d$-regular} graph is a graph where $d_v = d$ for each $v \in V$,
while a \emph{roughly $d$-regular} graph is a graph where $d - o(d) \leq d_v \leq d + o(d)$ for each $v \in V$.
For a vertex $v \in V$ we denote by $N(v)$ the neighborhood of $v$, including $v$
(that is, $N(v) := \{u : \{u, v\} \in E(G)\} \cup \{v\}$).

For a set $S \subseteq V$,
$N(S)$ denotes the neighborhood of $S$;
that is, $N(S)=\cup_{s \in S}N(s)$
(indeed, with this definition, $S \subseteq N(S)$).
The $i$th \emph{neighborhood} of a vertex $v \in V$ is $N_{i}(v)=N(N_{i-1}(v))$,
where $N_{1}(v)=N(v)$.
The $i$th \emph{gained neighborhood} of a vertex $v \in V$
is $\tilde{N}_{i}(v)=N_{i}(v)\setminus N_{i-1}(v)$.
For a set of vertices $H \subseteq V$, let $G_{H}$ denote the subgraph of $G$ induced on the vertices of $H$.
For $S,T\subseteq V(G)$, $E_{G}(S,T)$ denotes
the number of edges between $S$ and $T$ in $G$,
where,
if $S,T$ are not
disjoint, then the edges in the induced subgraph of $S\cap T$ are
counted twice.

\para{Random graphs and expanders.}
A graph $G$ with $n$ vertices is a \emph{random graph} which is distributed by \emph{$G_{n,d}$},
and according to Erd\"{o}s-R\'{e}nyi model,
if each edge is included in the graph with probability $p=\frac{d}{n-1}$,
independently from every other edge.
We denote the adjacency matrix of a graph $G$ by $A_{G}$.
We denote $A_{G}$'s normalized eigenvectors by $e_{1}(G),e_{2}(G),\ldots,e_{n}(G)$
and the corresponding real eigenvalues by $\lambda_{1}(G)\ge\lambda_{2}(G)\ge,\ldots,\ge\lambda_{n}(G)$.
A graph $G$ is an \emph{$(s,\alpha)$ vertex expander} if for every $S\subseteq V(G)$ of size at most $s$
it holds that the neighborhood of $S$ is of size at least $\alpha|S|$.
A graph $G$ is a \emph{$\lambda$-expander} if $\max(\lambda_{2}(G),|\lambda_{n}(G)|) \le \lambda$.

\para{Facilities, scores, and balancedness.}
Given a graph $G$,
we consider $k$ players,
denoted by $P_1, \ldots, P_k$.
Each player places one facility on a vertex of $G$,
specifically player $P_i$ is choosing a vertex $u_i$ for placing his or her facility $f_i$.
We assume that no two players put their facilities on the same vertex.
Given such placement,
we define for a vertex $v \in V$ the set $C_v \subseteq \{P_1, \ldots, P_k\}$ of players whose facilities are placed closest to it;
that is,
$C_v := \{ P_i : d(v, u_i) \leq d(v, u_j), \forall j \neq i \}$,
where $d(u, v)$ stands for the distance between the vertices $u$ and $v$,
which is defined to be $\infty$ whenever $u$ and $v$ are disconnected.
We define the score of player $P_i$ to be
$\score(P_i) := \sum_{v \in V; P_i \in C_v} |C_v|^{-1}$.
A placement of $k$ facilities on $G$ is said to be $z$-\emph{balanced}
if $\lfloor n/k \rfloor - z \leq \score(P_i) \leq \lceil n/k \rceil + z$ for all $i \in [k]$.
A graph $G$ is said to be $z$-\emph{balanced} if any placement on it is $z$-balanced.

\section{Random Graphs}\label{section:randomgraphs}

In this section we consider random graphs,
generated according to Erd\"{o}s-R\'{e}nyi model (see Section~\ref{section:preliminaries}),
and analyse their balancedness.
In Theorem~\ref{thm:random graphs are balanced},
we prove some good balancedness properties of such graphs.
Building upon this theorem,
in Section~\ref{section:algorithmone} we describe an algorithm for verifying graph balancedness.

\begin{theorem}\label{thm:random graphs are balanced}		
  Let $G \sim G_{n,d}$ be a random graph,
  let $k = o(d)$,
  and let $d = \Omega (\log n)$.
  Then,
  for any arbitrarily small constant $\delta > 0$,
  with high probability,
  $G$ is $\delta n$-balanced for $k$ players.
\end{theorem}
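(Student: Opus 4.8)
The plan is to show that in a random graph $G \sim G_{n,d}$ with $d = \Omega(\log n)$, distances concentrate so tightly that almost every vertex is at the same distance from almost every facility, forcing the scores to be nearly equal. Concretely, I would first establish a \emph{neighborhood growth lemma}: with high probability, for every vertex $v$ and every $i$ up to some threshold $i_0$, the $i$th neighborhood $N_i(v)$ has size roughly $d^i$ (more precisely, between $(d-o(d))^i$ and $(d+o(d))^i$), until $|N_{i_0}(v)|$ first exceeds, say, $n^{2/3}$. This follows from a standard breadth-first-search exposure argument: when a layer has size $m \ll n$, the expected size of the next gained neighborhood $\tilde N_{i+1}(v)$ is about $d \cdot m \cdot (1 - m/n)$, and since $d = \Omega(\log n)$ we get exponential concentration via Chernoff bounds together with a union bound over the at most $n$ vertices and $O(\log_d n)$ layers. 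The upshot is a \emph{diameter/distance bound}: there is a radius $r = (1+o(1))\log_d n$ such that, with high probability, every pair of vertices is within distance $r$, and moreover the ``shell structure'' is the same for all vertices --- for each vertex $v$, all but an $o(n)$ fraction of the other vertices lie in the single shell $\tilde N_r(v)$, i.e.\ at distance exactly $r$ or in a narrow band $\{r-1, r\}$.

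Next I would use this to control the score of a fixed placement of $k$ facilities $f_1, \ldots, f_k$ on vertices $u_1, \ldots, u_k$. Since $k = o(d)$, the $k$ facilities are ``spread out'' and the ball growth around each $u_i$ is unaffected by the presence of the others until the balls are already of size $n^{2/3}$; thus for each facility $f_i$, all but $o(n)$ vertices $v$ satisfy $d(v, u_i) \in \{r-1, r\}$ for the common radius $r$. For a vertex $v$ to be ``captured'' unevenly --- i.e.\ to land in $C_v$ for one facility but not another --- the distances $d(v, u_i)$ must differ across facilities; but the band argument says each distance is $r$ or $r-1$, so the only vertices that can contribute to an imbalance are those $v$ for which some $d(v,u_i) = r-1$ while some $d(v,u_j) = r$ (plus the $o(n)$ exceptional vertices near each facility). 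I would bound the number of vertices at distance exactly $r-1$ from $u_i$: this is $|N_{r-1}(u_i)|$, which is at most $n / d^{1-o(1)} = o(n)$ by the choice of $r$ and the growth lemma. Summing over the $k = o(d)$ facilities, the total number of ``bad'' vertices --- those that could possibly be captured non-uniformly --- is $o(n) \cdot k + o(n) = o(n)$. Every one of the remaining $n - o(n)$ vertices has all $k$ distances equal (to $r$), hence is split evenly, contributing $1/k$ to each player's score. Therefore each $\score(P_i)$ is within $o(n)$ of $(n - o(n))/k \cdot 1$, wait --- more carefully, within $o(n)$ of $n/k$; choosing the hidden constants small enough makes this $\le \delta n$, which is exactly $\delta n$-balancedness.

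Finally, I would handle the quantifier over \emph{all} placements. There are at most $\binom{n}{k} \le n^k$ placements of $k$ facilities, and $n^k = e^{k \log n} = e^{o(d) \log n}$. As long as the failure probability of the distance-concentration event for a single vertex is, say, $e^{-\Omega(d)}$ per vertex with $d = \Omega(\log n)$ chosen with a large enough constant, a union bound over all vertices \emph{and} a union bound over all $n^k$ placements still leaves the bad event with probability $o(1)$. (In fact the cleanest route is to prove the neighborhood-growth event once and for all --- it does not depend on the placement at all --- and observe that the placement-dependent estimates above are \emph{deterministic} consequences of that single event; then no union bound over placements is needed, just over the $n$ vertices and $O(\log n)$ layers.)

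The main obstacle I anticipate is the neighborhood growth lemma near the boundary where $|N_i(v)|$ transitions from $\ll n$ to $\Theta(n)$: the clean multiplicative growth $|N_{i+1}| \approx d|N_i|$ degrades once $|N_i|$ is a constant fraction of $n$ (the $1 - |N_i|/n$ saturation factor kicks in), and one must argue carefully that exactly one ``large'' shell absorbs almost all vertices and that this shell is at the \emph{same} index $r$ for every vertex --- otherwise different facilities could sit at genuinely different distance scales from a positive fraction of vertices. Getting the $o(n)$ error terms to be uniform in the vertex, and verifying that $k = o(d)$ (not a stronger assumption) suffices to keep the $k$ balls non-interacting until they are large, is where the real work lies; the rest is Chernoff-and-union-bound bookkeeping.
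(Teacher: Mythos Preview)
Your outline matches the paper's argument closely for what the paper calls Case~1, namely when $d = \omega(n^{1/i})$ and $d = o(n^{1/(i-1)})$ for some integer $i$: there the $(r-1)$th ball around every vertex has size $\Theta(d^{i-1}) = o(n)$, the $r$th ball is all of $V$, and your ``single large shell'' picture is correct. The deterministic-consequence observation (prove the neighborhood-growth event once, then no union bound over placements is needed) is also exactly how the paper proceeds.

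The gap is Case~2: when $d = \Theta(n^{1/i})$ for some integer $i$, your key estimate $|N_{r-1}(u_j)| = o(n)$ is simply false. In this regime one has $|N_{i-1}(v)| = \Theta(d^{i-1}) = o(n)$, but $|\tilde N_i(v)| = c\,n$ for a constant $c \in (0,1)$ (roughly $1 - e^{-1}$ when $d = n^{1/i}$), and only at step $i+1$ does the ball cover everything. So there are \emph{two} shells each containing a constant fraction of $V$, and for a positive fraction of vertices $w$ one genuinely has $d(w,u_1) = i$ while $d(w,u_2) = i+1$ or vice versa. Your bound ``sum of $|N_{r-1}(u_j)|$ over $j$ is $o(n)$'' breaks down, and the anticipated resolution you describe---``argue that exactly one large shell absorbs almost all vertices''---is not what happens.

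The paper handles Case~2 by a symmetry argument rather than a smallness argument: since the shell sizes satisfy $|\tilde N_i(v_1)| = |\tilde N_i(v_2)| \pm \tilde O(\sqrt d)$ (and likewise at level $i+1$), one gets $|\tilde N_i(v_1)\setminus \tilde N_i(v_2)| = |\tilde N_i(v_2)\setminus \tilde N_i(v_1)| \pm \tilde O(\sqrt d)$, so the $\Theta(n)$ many ``bad'' vertices split almost evenly between the two players and the score \emph{difference} is still $o(n)$. This is the missing idea in your proposal; once you add it (and note that the concentration bound on $|\tilde N_i(v)|$ must be additive of order $\tilde O(\sqrt{d^i})$, not just multiplicative $(1\pm o(1))$, for this cancellation to work), the rest of your plan goes through.
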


\begin{proof}
For ease of presentation,
we present the proof for $k = 2$ players;
within the proof,
we explain how to generalize the proof to hold for $k = o(d)$ players,
as well as to why it does not hold for larger values of $k$.
Moreover,
we assume that $d = n^\epsilon$,
for an arbitrarily small constant $\epsilon$;
while the proof holds also for $d = \Omega(\log n)$,
it is easier to explain and clearer to understand for $d = n^\epsilon$,
and all ingredients of the proof are present in this case.

The proof builds upon the observation that the number of vertices with distance $i$ to any vertex in a $G_{n,d}$ is,
with high probability,
roughly $d^i$.
Then,
there are two cases to consider.
If $d \neq \Theta(n^{\frac{1}{i}})$ for some $i$ (Case 1),
then we show that,
with high probability,
the number of vertices at distance at most $i - 1$ from any facility is negligible,
while all vertices are at distance at most $i$ from any facility.
This means that most vertices are at the same distance from all facilities,
thus the graph is balanced in this case.
Otherwise,
if $d = \Theta(n^{\frac{1}{i}})$ for some $i$ (Case 2),
then we show that,
with high probability,
the number of vertices at distance at most $i - 1$ from any facility is $cn$,
where $0 \leq c \leq 1$ is some constant which is,
importantly,
equal for all facilities,
while all vertices are at distance at most $i$ from any facility.
This means that each facility has roughly $cn$ vertices which are the closest to it,
while most remaining vertices are shared between the facilities.

Specifically,
let $v$ be a vertex of $G$. By the Chernoff bound (Appendix, Theorem~\ref{thm:Chernoff}),
	it follows that,
	with high probability it holds that
\begin{equation}
	n_{1} := |N(v)|=d\pm\tilde{O}(\sqrt{d}).\label{eq:N_1}
\end{equation}
Throughout the proof, we condition on the
	event that Equation~\ref{eq:N_1} holds.
	
	Given $n_{1}$ and a vertex $u\notin N(v)$,
	the probability	that $u$ is in $\tilde{N}_{2}(v)$ is \linebreak $p_{2}=1-(1-\frac{d}{n-1})^{n_{1}}$.
	Let $n_{2} := |\tilde{N}_{2}(v)|$.
	Then, the following hold:
	\begin{enumerate}
		\item If $d=o(n^{\frac{1}{2}})$ then, by Bernoulli's inequality (Appendix, Theorem~\ref{thm:Bernoulli}),
		the quantity $p_{2}$ is $\frac{n_{1}d}{n}+(\frac{n_{1}d}{n})^2\approx \frac{n_{1}d}{n}$.
        Notice that we use $\approx$ to denote the omission of low order terms;
        further, when $a \approx b$ for some values $a$ and $b$,
        we say that $a$ is \emph{approximately} $b$.
        It follows that $\mathbb{E}_{G\sim G_{n,d}}[n_{2}]\approx(n-d)\frac{d^{2}}{n}\approx d^{2}$.  
		\item If $d=\omega(n^{\frac{1}{2}})$ then the quantity $p_{2}$
		is approximately $1$. It follows that $\mathbb{E}_{G\sim G_{n,d}}[n_{2}]\approx n$.
		\item If $d=\Theta(n^{\frac{1}{2}})$ then the quantity $p_{2}$ is some constant $c_{d,2}$ that depends on $d$.
		Specifically, if $d=cn^{1/2}$ then $c_{d,2} \approx 1-e^{-\frac{n_{1}d}{n}}=1-e^{-c}$.
		It follows that	$\mathbb{E}_{G\sim G_{n,d}}[n_{2}]=c_{d,2}n$.
	\end{enumerate}
	
\noindent	Applying Chernoff bound, we conclude that,
with high probability it holds that
	\begin{equation}
	n_{2}=\mathbb{E}[n_{2}]\pm\tilde{O}(\sqrt{d}).\label{eq:N_2}
	\end{equation}
	Throughout the proof, we condition
	on the event that Equation~\ref{eq:N_2} holds.
	
	Let $n_{i}=|\tilde{N}_{i}(v)|$. By repeating the above arguments
	for $2<i\le\log n$, \linebreak
	we conclude that the following hold with high probability:

	\begin{enumerate}
		\item If $d=o(n^{\frac{1}{i}})$ then $n_{i}=d^{i}\pm\tilde{O}(\sqrt{d^{i}})$.
		\item If $d=\omega(n^{\frac{1}{i}})$ and $d=o(n^{\frac{1}{i-1}})$
		then $|N_{i}(v)|=n-\tilde{O}(\sqrt{n})$.
		\item If $d=\Theta(n^{\frac{1}{i}})$ then $n_{i}=c_{d,i}n\pm\tilde{O}(\sqrt{n})$
		some some constant $c_{d,i}$.
	\end{enumerate}
	
	\noindent Applying union bound, we conclude that
	$n_{i}$ is as stated above
	for every $i$ and every vertex $v$ in $G$.
	Next we consider two cases,
	differentiated by the value of~$d$.
	
	\medskip\noindent\textbf{Case 1.}
	Let $d=\omega(n^{\frac{1}{i}})$ and also $d=o(n^{\frac{1}{i-1}})$ for some $i$.
	Consider two players,	placing their facilities on two arbitrary vertices, $v_1$ and $v_2$.
	By the analysis above,
	we have that
	$
	|\tilde{N}_{i}(v_{k})|=n-\Theta(d^{i-1}).
	$
	Thus, it holds that
	\begin{align}
		|\{ w\,:\, d_{G}(w,v_1)=d_{G}(w,v_2)\} | & \ge|\tilde{N}_{i}(v_1)\bigcap\tilde{N}_{i}(v_2)|\label{eq:B2} \\
		& =n-|\bar{\tilde{N}}_{i}(v_1)\bigcup\bar{\tilde{N}}_{i}(v_2)|\nonumber\\
		& \ge n-|\bar{\tilde{N}}_{i}(v_1)|-|\bar{\tilde{N}}_{i}(v_2)|\nonumber \\
		& =n-\Theta(d^{i-1}).\nonumber
	\end{align}

\noindent
We conclude that the score each player gets is at least $n / 2 - \Theta(d^{i - 1})$ and at most $n / 2 + \Theta(d^{i - 1})$,
which,
for large enough values of $n$,
means that in this case,
all placements are $\delta n$-balanced.

   \medskip
   \begin{remark}
     Let us briefly explain how the proof generalizes to $k = \Theta(d)$.
     Here,
     we would have obtained the following:
   	\begin{align*}|\{ w\,:\, d_{G}(w,v_{1})=d_{G}(w,v_{2})=...=d_{G}(w,v_{k})\} | & \ge|\bigcap_{j=1}^{k}\tilde{N}_{i}(v_{j})|\\
   	& =n-|\bigcup_{j=1}^{k}\bar{\tilde{N}}_{i}(v_{j})|\\
   	& \ge n-\sum_{j=1}^{k}|\bar{\tilde{N}}_{i}(v_{j})|\\
   	& =n-kd^{i-1}.
   	\end{align*}
   	For the graph to be,
   	say,
   	$0.1n$-balanced, the value of $k$ has to satisfy $k d^{i-1}\le 0.1n$.
   	Since $d^{i-1}=\frac {n}{d}$, it follows that $k\le 0.1d$. 	
  \end{remark}

  \noindent\textbf{Case 2.}
	Let $d=\Theta(n^{\frac{1}{i}})$ for some $i$.
	Consider two players,	placing their facilities on two arbitrary vertices, $v_1$ and $v_2$.
	Then, with high probability the following holds (using similar techniques, one can generalize the proof for this case to general $k$ as well):
		\begin{align}
		|\{ w\,:\, d_{G}(w,v_1)<d_{G}(w,v_2)\} | & =\sum_{j=1}^{i+1}|\{ w\,:\, d_{G}(w,v_1)<d_{G}(w,v_2)\,\wedge\, w\in\tilde{N}_{j}(v_1)\} |\nonumber\\
		& \le \left(\sum_{j=1}^{i-1}|\tilde{N}_{j}(v_1)|\right) + |\tilde{N}_{i}(v_1)\setminus\tilde{N}_{i}(v_2)|
		+|\tilde{N}_{i+1}(v_1)\setminus\tilde{N}_{i+1}(v_2)|  \nonumber \\
		& \le \Theta(d^{i-1}) + |\tilde{N}_{i}(v_1)\setminus\tilde{N}_{i}(v_2)|
		+|\tilde{N}_{i+1}(v_1)\setminus\tilde{N}_{i+1}(v_2)|.\nonumber
		\end{align}
		Recall that $|\tilde{N}_i(v_1)| = |\tilde{N}_i(v_2)| \pm \tilde{O}(\sqrt{d})$,
		and therefore it holds that:
		$$|\tilde{N}_{i}(v_1)\setminus\tilde{N}_{i}(v_2)|=|\tilde{N}_{i}(v_2)\setminus\tilde{N}_{i}(v_1)|
		\pm\tilde{O}(\sqrt{d}),$$
		 and that
		 $$|\tilde{N}_{i+1}(v_1)\setminus\tilde{N}_{i+1}(v_2)|=|\tilde{N}_{i+1}(v_2)\setminus\tilde{N}_{i+1}(v_1)|
		 \pm\tilde{O}(\sqrt{d}).$$

		  It follows that the score difference between the players is $\Theta(d^{i-1})+\tilde{O}(\sqrt{d})$,
		  which,
		  for sufficiently large values of $n$,
		  is smaller than $\delta n$,
		  as needed.
	%
~\qed\end{proof}

\begin{remark}
Notice that some restriction on the value of $k$ in the statement of the last theorem is needed;
for example,
if $k > d$, then $d$-regular graphs generally do not achieve reasonable balancedness.
To see this,
notice that for $k = d + 1$,
placing one facility on an arbitrary vertex $v$ while placing other facilities on all its $d$ neighbors as well
results in a score of $1$ for the facility placed on $v$
(as long as the graph is connected).
\end{remark}

\section{Expanders}\label{section:expanders}

The analysis performed in the last section shows that random graphs have good balancedness properties.
Moreover,
taking a closer look at our proof,
it looks as if the main property of random graphs which we used is that random graphs are good expanders.
 Namely, it is
 a well known fact that with high probability $\lambda_{2}\left(G\right)=\Theta\left(\sqrt{d}\right)$
 (for example, see \cite{feige2005spectral,furedi1981eigenvalues,friedman1989second}).

This motivates further studying whether expanders have good balancedness properties,
which is the subject of the current section.
It turns out that,
in some sense,
the balancedness of expanders is inferior to that of random graphs,
as is apparent from the degrees restriction imposed in the statement of Theorem~\ref{thm:expanders are balanced},
which is backed by an example of an expander with bad balancedness properties,
depicted in Figure~\ref{figure:expander which is not balanced}
and described in the proof of Theorem~\ref{thm:expander which is not balanced}.

The structure of the next proof,
which is deferred to the appendix,
is somewhat similar to the structure of the proof of Theorem~\ref{thm:random graphs are balanced},
and the overall tactics is to argue that the sizes of the $i$th neighborhoods of most graphs are roughly similar.
The main difference between the proof of the next theorem and the proof of Theorem~\ref{thm:random graphs are balanced} is that,
while the proof of Theorem~\ref{thm:random graphs are balanced} uses
probabilistic arguments to estimate the neighborhood sizes of various facilities,
the proof of the next theorem uses expansion properties of expanders.

We mention that in Section~\ref{section:algorithmtwo}
we build upon the next theorem for devising a somewhat,
a-priori surprising algorithm for verifying graph balancedness,
based on spectral analysis.

\begin{restatable}{theorem}{theoremexpanders}\label{thm:expanders are balanced}
  Let $G$ be a $d$-regular $\lambda$-expander graph of $n$ vertices,
  let $\lambda = O(\sqrt{d})$, and let $d = n^\epsilon$.
  Then, $G$ is $o(n)$-balanced for $k = o(d)$ players,
  as long as there is no positive integer $i$ for which $\epsilon = \frac{1}{i}$.
\end{restatable}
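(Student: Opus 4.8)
The plan is to mirror the structure of the proof of Theorem~\ref{thm:random graphs are balanced}, replacing the probabilistic estimates on the sizes of the gained neighborhoods $\tilde N_i(v)$ with deterministic estimates derived from the spectral expansion property $\max(\lambda_2(G),|\lambda_n(G)|)\le\lambda=O(\sqrt d)$. Concretely, I would first invoke the standard expander mixing lemma: for any $S,T\subseteq V(G)$ in a $d$-regular $\lambda$-expander,
\[
\left|E_G(S,T)-\frac{d\,|S|\,|T|}{n}\right|\le\lambda\sqrt{|S|\,|T|}.
\]
Applying this with $T=\bar N(S)$ (so that $E_G(S,\bar N(S))=0$) yields the vertex-expansion bound: whenever $|S|=m$ we have $\frac{d\,m\,(n-|N(S)|)}{n}\le\lambda\sqrt{m(n-|N(S)|)}$, hence $n-|N(S)|\le\frac{\lambda^2 n}{d^2 m}\cdot\frac{n}{1}$ type estimate — more precisely $|N(S)|\ge n\bigl(1-\tfrac{\lambda^2}{d^2 m/n + \lambda^2/n}\bigr)$, which simplifies (using $\lambda^2=O(d)$) to $|N(S)|\ge \min\bigl(\Theta(d)\cdot m,\ n-O(n/(d\,m/n))\bigr)$. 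The upshot I want to extract is: as long as $|S|=o(n/\lambda^2)\cdot d = o(n^2/(d\,\lambda^2))$-ish, each neighborhood step multiplies the set size by $\Theta(d)$, and once the set has reached size $\Omega(\text{something})$ the complement of its next neighborhood is $O(n\lambda^2/(d^2 |S|))$, which shrinks geometrically. Since $\lambda=O(\sqrt d)$ and $d=n^\epsilon$, this gives $|\tilde N_i(v)|=d^i(1\pm o(1))$ while $d^i=o(n)$, i.e. $i<1/\epsilon$, and $|N_i(v)|=n-o(n)$ once $d^i=\omega(n)$, i.e. $i>1/\epsilon$. The hypothesis that $\epsilon\ne 1/i$ for every positive integer $i$ is exactly what rules out the intermediate ``Case 2'' regime $d=\Theta(n^{1/i})$, where spectral expansion alone cannot pin down the constant $c_{d,i}$ — this is the source of the weaker conclusion compared to random graphs and is the whole point of the theorem's side condition.

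Next I would carry out the two-case argument exactly as in Theorem~\ref{thm:random graphs are balanced}. Let $i$ be the unique integer with $d^{i-1}=o(n)$ and $d^i=\omega(n)$ (this exists and is unique precisely because $1/\epsilon$ is not an integer). For any facility placed at a vertex $v$ we then have $|N_{i-1}(v)|=O(d^{i-1})=o(n)$ and $|N_i(v)|=n-o(n)$, so in fact $|\tilde N_i(v)|=n-o(n)$. Given two facilities at $v_1,v_2$, the set of vertices strictly closer to $v_1$ than to $v_2$ is contained in $N_{i-1}(v_1)\cup\bigl(\tilde N_i(v_1)\setminus\tilde N_i(v_2)\bigr)\cup\bigl(\tilde N_{i+1}(v_1)\setminus\tilde N_i(v_2)\bigr)$; the first term is $o(n)$, and for the remaining terms I use that both $\tilde N_i(v_1)$ and $\tilde N_i(v_2)$ have size $n-o(n)$, so their symmetric difference is $o(n)$. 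Hence $|\{w:d(w,v_1)<d(w,v_2)\}|=o(n)$ and symmetrically $|\{w:d(w,v_1)>d(w,v_2)\}|=o(n)$, so almost all vertices are equidistant and each player's score is $n/2\pm o(n)$, i.e. the placement is $o(n)$-balanced. For $k=o(d)$ players the bound becomes $\sum_{j}|N_{i-1}(v_j)|+\sum_j(\text{symmetric-difference terms})=o(n)$ since $k\cdot o(n/k)$-type sums stay $o(n)$ — here one must be slightly careful to absorb the $k$ factor, which works because $k=o(d)$ and the per-facility error is $O(d^{i-1})=O(n/d)$, so $k\cdot O(n/d)=o(n)$, just as in the remark following Case~1 of Theorem~\ref{thm:random graphs are balanced}.

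The step I expect to be the main obstacle is making the expander-mixing-lemma bootstrap fully rigorous near the two ends of the recursion: proving that the multiplicative growth $|\tilde N_{j+1}(v)|=\Theta(d)\cdot|\tilde N_j(v)|$ really holds for all $j\le i-1$ (the mixing lemma controls $|N(S)|$ but one needs $|\tilde N_{j+1}(v)|=|N(N_j(v))|-|N_j(v)|$, so one needs a lower bound on the gain, which requires $|N_j(v)|$ to be not too large — fine while it is $o(n)$ but it needs a clean inequality), and separately proving the geometric collapse $n-|N_{j+1}(v)|\le \frac{\lambda^2}{d}\cdot\frac{n^2}{d\,|N_j(v)|}$ once $|N_j(v)|=\omega(n/d)$ so that after $O(1)$ more steps the complement is $o(n)$. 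A secondary subtlety, flagged in the theorem statement by the $\epsilon\ne 1/i$ condition, is that one must honestly check the argument breaks exactly when $d=\Theta(n^{1/i})$: there the relevant neighborhood $\tilde N_i(v)$ has size $\Theta(n)$ but the mixing lemma only gives $c_1 n\le|\tilde N_i(v)|\le c_2 n$ with $c_1\ne c_2$, so two facilities could genuinely have unbalanced-looking neighborhoods — and indeed Theorem~\ref{thm:expander which is not balanced} supplies a concrete counterexample in that regime, so no amount of cleverness with expansion alone can close that gap. I would state the mixing-lemma consequences as a short standalone claim, then plug them into the verbatim Case~1 analysis, and explicitly note the exclusion of the $\epsilon=1/i$ case rather than attempting to handle it.
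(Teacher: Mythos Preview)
Your proposal is correct and follows essentially the same route as the paper: both replace the probabilistic neighborhood estimates of Theorem~\ref{thm:random graphs are balanced} with deterministic ones obtained from spectral expansion, establish $|N_{i-1}(v)|=\Theta(d^{i-1})=o(n)$ and $|N_i(v)|=n-o(n)$ for the unique $i$ with $i-1<1/\epsilon<i$, and then rerun the Case~1 intersection argument to get scores $n/k\pm o(n)$. The only cosmetic difference is that the paper cites a packaged spectral-to-vertex-expansion lemma (Lemma~\ref{lem:Spectral to vertex expansion}) for the growth phase and invokes the mixing lemma (Lemma~\ref{lem:Mixing Lemma}) separately for the collapse phase, whereas you derive both directly from the mixing lemma via the $T=\bar N(S)$ trick; these are equivalent, and your identification of the $\epsilon=1/i$ exclusion as the genuine obstruction (cf.\ Theorem~\ref{thm:expander which is not balanced}) matches the paper exactly.
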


\begin{remark}
We mention that Theorem~\ref{thm:expanders are balanced} can be extended to graphs which are roughly-regular expanders.
While in the proof of Theorem~\ref{thm:expanders are balanced} we used the expander mixing lemma for regular graphs,
there exists a general version for the expander mixing lemma which gives,
for irregular graphs,
similar statements as those we used (e.g., see,~\cite{chung1997spectral}).
Using the general form of the expander mixing lemma for roughly-regular graphs,
the extension for roughly-regular expanders follows.
\end{remark}

Next we demonstrate that the restriction on $\epsilon$ in Theorem~\ref{thm:expanders are balanced} is necessary
(showing that it is not an artifact of our proof technique).
Specifically,
we show a family of expander graph whose degree regularity does not follow the aforementioned restriction,
and such that these graphs have bad balancedness properties.

\begin{theorem}\label{thm:expander which is not balanced}
  Let $d=\Theta (n^{\frac{1}{i}})$.
  Then,
  for every positive integer $i$,
  there exist roughly $d$-regular $\lambda$-expander graphs with $n$ vertices that are not $o(n)$-balanced.
\end{theorem}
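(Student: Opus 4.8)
The plan is to construct a single randomised family of graphs and to exploit the fact that, \emph{exactly} at the threshold $d=\Theta(n^{1/i})$, a ball $N_i(v)$ fills only a constant — rather than a $(1-o(1))$ — fraction of the vertex set (this is precisely what fails when $\epsilon\neq\tfrac1i$, see the proof of Theorem~\ref{thm:expanders are balanced}), so there is room to plant two vertices whose $i$th neighbourhoods differ in size by a constant factor while the graph stays spectrally expanding. Fix a constant $\gamma\in(0,\tfrac12)$ and designate disjoint vertex sets $N_{i-1}(v_2)$ and $T$ with $|T|=\gamma n$. Around $v_2$ I would grow a tree-like system of layers $L_0=\{v_2\},L_1,\dots,L_{i-1}$ with branching factor $\approx d$, so that $|L_j|\approx d^{\,j}$ and $|N_{i-1}(v_2)|=\Theta(d^{\,i-1})=o(n)$, but I would route \emph{all} of the $\approx d^{\,i}$ edges leaving $L_{i-1}$ into $T$. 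The vertex $v_1$ is placed in the complementary ``bulk''; that bulk, together with the extra edges needed to bring every degree to $d\pm o(d)$ and to connect the graph (each vertex of $T$ keeps $\approx d$ further edges), is filled in as a uniformly random roughly $d$-regular graph conditioned on the planted pairings. This is the family depicted in Figure~\ref{figure:expander which is not balanced}, and the analysis splits into an easy balancedness part and a more technical expansion part.

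For balancedness, note first that since $L_{i-1}(v_2)$ emits edges only into $T$ we have $N_i(v_2)\subseteq N_{i-1}(v_2)\cup T$, hence $|N_i(v_2)|\le\gamma n+o(n)$; conversely $v_1$ lies in the random bulk, where the branching-$\approx d$ argument from the proof of Theorem~\ref{thm:random graphs are balanced} gives $|N_i(v_1)|\ge(1-o(1))\,c\,n$ for a constant $c$ that is $\ge 0.9$ once the hidden constant in $d=\Theta(n^{1/i})$ is taken large enough. The graph is connected (from $v_2$, layer $i+1$ already covers everything), so all distances are finite. Placing the two facilities at $v_1$ and $v_2$, every $w\in N_i(v_1)\setminus N_i(v_2)$ satisfies $d(w,v_1)\le i<i+1\le d(w,v_2)$, and there are at least $|N_i(v_1)|-|N_i(v_2)|\ge(0.9-\gamma)n-o(n)$ such vertices, so $\score(P_1)\ge(0.9-\gamma)n-o(n)\ge n/2+\Omega(n)$ for $\gamma$ small. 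Hence this placement is not $z$-balanced for any $z=o(n)$, which shows $G$ is not $o(n)$-balanced (for $k>2$ players one runs the same argument against a suitably enlarged planted region).

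For expansion, the only deviations from a random roughly $d$-regular graph are (i) the tree on the set $N_{i-1}(v_2)$, which has only $o(n)$ vertices, and (ii) the confinement of that set's $\Theta(n)$ boundary edges to $T$ rather than to all of $V$. For a set $S$ of size $\Theta(d^{\,i-1})=o(n)$ and a set $U$ of size $\Theta(n)$, the ``main term'' $\tfrac dn|S||U|$ and the ``error term'' $\lambda\sqrt{|S||U|}$ of the expander mixing lemma are of the \emph{same} order ($\Theta(n)$) when $\lambda=O(\sqrt d)$, so concentrating $E(N_{i-1}(v_2),\cdot)$ onto $T$ is consistent with $\lambda=O(\sqrt d)$ and produces no sparse cut; one checks similarly that $(T,\bar T)$ and $(N_{i-1}(v_2)\cup T,\ \text{rest})$ carry their expected number of crossing edges up to $O(\sqrt d)\sqrt{|\cdot|\,|\cdot|}$. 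To turn this consistency into a proof I would model $G$ by the configuration model conditioned on the planted half-edge pairings and run the standard trace-method / switching argument, showing that whp $\max(\lambda_2(G),|\lambda_n(G)|)=O(\sqrt d)$: the contaminated part touches only $o(n)$ vertices and contributes lower-order terms to $\mathbb{E}[\mathrm{tr}(A_G^{2\ell})]$, so it cannot push an extreme eigenvalue beyond the $\Theta(\sqrt d)$ bulk. This last step — making rigorous the heuristic that the mixing lemma cannot ``see'' an $o(n)$-sized planted region, by bounding the extreme eigenvalues of the constrained random model — is the main obstacle; the neighbourhood-size estimates and the resulting score gap follow the template already used for Theorems~\ref{thm:random graphs are balanced} and~\ref{thm:expanders are balanced}.
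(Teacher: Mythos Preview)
Your plan and the paper's both plant a distinguished vertex inside a random $G_{n,d}$ and then argue (i) a $\Theta(n)$ score gap and (ii) spectral expansion, but the two constructions go in \emph{opposite} directions and the expansion proofs are very different. You force the planted vertex $v_2$ to have an \emph{abnormally small} $i$th neighbourhood by funnelling all edges out of $L_{i-1}(v_2)$ into a set $T$ of size $\gamma n$. The paper instead attaches a new root $r$ on top of $G$ via a shallow tree of stars ($r$ connected to $\sqrt n$ new vertices, each connected to $\sqrt n$ disjoint original vertices), so that $|N_2(r)|=n$ is \emph{abnormally large} while every original vertex still has $|N_2(v)|\approx(1-e^{-1})n$. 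Either asymmetry yields the $\Omega(n)$ score gap, so on the balancedness side both arguments are fine. (A small remark: the family in Figure~\ref{figure:expander which is not balanced} is the paper's root-vertex construction, not your bottleneck construction.)

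The substantive difference is the expansion step, which you correctly flag as your ``main obstacle'' and do not actually carry out. The paper avoids this obstacle entirely: because the modification consists only of adding (a disjoint union of) star graphs, one can write $A_{G'}=A_G+A_S+A_{S_{\sqrt n}}$, use that $|\lambda_j(S_m)|\le\sqrt m$, and apply Weyl's perturbation inequality $\max_j|\lambda_j(A+N)-\lambda_j(A)|\le\max_j|\lambda_j(N)|$ to conclude $|\lambda_2(G')|\le|\lambda_2(G)|+2n^{1/4}=O(\sqrt d)$ in two lines. Your construction, by contrast, redirects $\Theta(n)$ edges (all of $E(L_{i-1},\cdot)$), so a naive perturbation bound is useless and you are pushed to a conditioned configuration-model trace argument. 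That route is plausible but genuinely nontrivial---and note that your own mixing-lemma consistency check already forces $\lambda\ge\sqrt{(1-\gamma)d}$ from the pair $(L_{i-1},\bar T)$, so the constants in ``$\gamma$ small'' and ``$\lambda=O(\sqrt d)$'' interact and must be tracked carefully. The paper's additive-star trick is the idea you are missing; it turns the hard spectral step into an exercise.
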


\begin{proof}
	We show an example for the case where $d=\sqrt{n}$, and mention that this example can be generalized for $d=\Theta (n^{\frac{1}{i}})$.
	Let $G$ be a random graph drawn by the distribution $G_{n,\sqrt{n}}$ (see Section~\ref{section:preliminaries}).
	We construct the graph $G'$ as follows:
	
	\begin{enumerate}[label=\textbf{Step \arabic*.}, leftmargin=*]
	\item Add a new \emph{root} vertex $r$ as well as another $\sqrt{n}$ \emph{new} vertices to the vertex set of $G$. 	
	\item By introducing new edges, connect each new vertex to $\sqrt{n}$ original vertices such that no two new vertices share a common neighbor.
	\item By introducing new edges, connect $r$ to each of the $\sqrt{n}$ new vertices.	
	\end{enumerate}
		
	First we prove that $G'$ is not $o(n)$-balanced.
	It follows from the proof of Theorem~\ref{thm:random graphs are balanced} that,
	with high probability, it holds for any \emph{original vertex} (that is any vertex of $G$), that
	$|N_2(v)|\approx (1-e^{-1})n$
	(the addition of $\sqrt{n}+1$ new vertices effects $|N_2(v)|$ by at most $\sqrt{n}$).
	However, $|N_2 (r)|=n$ by construction.
	Thus, $G'$ is not $o(n)$-balanced;
	indeed, placing one facility on $r$ while the other on original vertices results in a non $o(n)$-balanced placement.
	
	Next we provide an upper bound on $\lambda (G')$, to show that $G'$ is indeed an expander, as claimed.
	Naturally,
	we rely on the fact that random graphs are good expanders.
	That is,
	it is well known that,
	with high probability,
	it holds that $\lambda_{2}(G)=\Theta(\sqrt{d})$
	(see, for example,~\cite{feige2005spectral,friedman1989second,furedi1981eigenvalues}).
	Next we show that our modifications to $G$ do not change its expansion too much.
	
	First,
	notice that in Step~1 we added $\sqrt{n}+1$ eigenvectors to the graph, all with eigenvalue zero.
	To see this,
	for each added vertex $u$, consider the vector $\vec{1}_u$ that has value $1$ on $u$'s coordinate and $0$ on the rest of the coordinates.
	Thus,
	after performing the modification described in Step~1 the expansion of the graph do not change.
	
	Second,
	consider the following inequality from perturbation theory for matrices
	that holds for any two symmetric matrices $A,N\in\mathbb{R}^{n,n}$
	(see, for example, \cite{bhatia2013matrix}):
	\begin{equation}
	\max_{i:1\le i\le n}|\lambda_{i}(A+N)-\lambda_{i}(A)|\le\max_{i:1\le i\le n}|\lambda_{i}(N)|.\label{eq:Sums of Eigenvalues}
	\end{equation}
	Namely, this inequality shows that adding a matrix $N$ to a matrix	$A$ can change the eigenvalues of $A+N$
	by at most $\max_{i:1\le i\le n}|\lambda_{i}(N)|$.
	Notice that we can write the adjacency matrix of $G'$ as
	$
	A_{G'}=A_{G}+A_{S}+A_{S_{\sqrt{n}}},
	$
	where $A_S$ is added in Step~2 and $A_{S_{\sqrt{n}}}$ is added in Step~3.
	Here, the graph $S_{m}$ is the star graph with $m+1$ vertices
	(recall that a star graph is a bipartite graph with $m+1$ vertices where one vertex is connected to all the other vertices)
	and $S$ is the disjoint union of $\sqrt{n}$ copies of $S_{\sqrt{n}}$.
	
	Finally,
	we rely on the fact that, for star graphs $S_{m}$ with $m + 1$ vertices,
	it holds that $|\lambda_{i}(S_{m})|\le\sqrt{m}$, for every $i \in [m + 1]$.
	Hence, since $S$ is the disjoint union of star graphs, for every $i$ it follows that
	$|\lambda_{i}(A_{S})|\le n^{\frac{1}{4}}$.
	Applying Inequality~\ref{eq:Sums of Eigenvalues},
	for every $i$, we have that
	\begin{align*}
		|\lambda_{i}(A_{G})|-2n^{\frac{1}{4}} & \le|\lambda_{i}(A_{G'})|\label{eq:G_H is expander} \le|\lambda_{i}(A_{G})|+2n^{\frac{1}{4}}.\nonumber
	\end{align*}

  Thus,
  we conclude that $|\lambda_{2}(G')| \le |\lambda_{2}(G)|+\Theta(\sqrt{|\lambda_{2}(G)|})$.
	Further, the above construction gives a roughly $d$-regular graph; thus, the proof follows.
~\qed\end{proof}

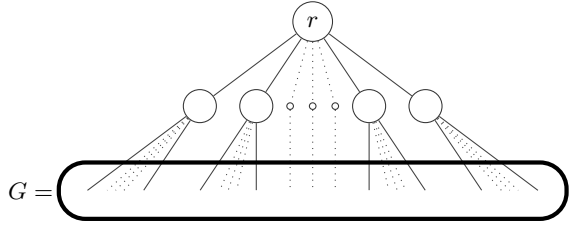
\begin{figure}[t]
    \center
    \begin{tikzpicture}[draw=black!75, scale=0.75,-]
      \tikzstyle{rootvertex}=[circle,draw=black!80,minimum size=15pt,inner sep=0pt]
      \tikzstyle{newvertex}=[circle,draw=black!80,minimum size=12.5pt,inner sep=0pt]
      \tikzstyle{kill}=[circle,draw=black!80,minimum size=10pt,inner sep=0pt]
      \tikzstyle{dummyvertex}=[circle,draw=black!80,minimum size=2.5pt,inner sep=0pt]
      \tikzstyle{kill}=[circle,draw=black!80,minimum size=2.5pt,inner sep=0pt]
      \tikzstyle{kill}=[circle,draw=black!80,minimum size=0pt,inner sep=0pt]

      \foreach [count=\i] \pos / \text / \style in {
        {(4,3)}/$r$/rootvertex,
        {(2,1.5)}/$ $/newvertex,
        {(3,1.5)}/$ $/newvertex,
        {(5,1.5)}/$ $/newvertex,
        {(6,1.5)}/$ $/newvertex,
        {(0,0)}/$ $/kill,
        {(1,0)}/$ $/kill,
        {(2,0)}/$ $/kill,
        {(3,0)}/$ $/kill,
        {(5,0)}/$ $/kill,
        {(6,0)}/$ $/kill,
        {(7,0)}/$ $/kill,
        {(8,0)}/$ $/kill,
        {(3.6,1.5)}/$ $/dummyvertex,
        {(4,1.5)}/$ $/dummyvertex,
        {(4.4,1.5)}/$ $/dummyvertex,
        {(3.6,0)}/$ $/kill,
        {(4,0)}/$ $/kill,
        {(4.4,0)}/$ $/kill,
        {(0.35,0)}/$ $/kill,
        {(0.5,0)}/$ $/kill,
        {(0.65,0)}/$ $/kill,
        {(2.35,0)}/$ $/kill,
        {(2.5,0)}/$ $/kill,
        {(2.65,0)}/$ $/kill,
        {(5.35,0)}/$ $/kill,
        {(5.5,0)}/$ $/kill,
        {(5.65,0)}/$ $/kill,
        {(7.35,0)}/$ $/kill,
        {(7.5,0)}/$ $/kill,
        {(7.65,0)}/$ $/kill}
      {
        \node[\style] (V\i) at \pos {\text};
      }

      \foreach \i / \j in {1/2,1/3,1/4,1/5,2/6,2/7,3/8,3/9,4/10,4/11,5/12,5/13} {
        \path[] (V\i) edge (V\j);
      }

      \foreach \i / \j in {1/14,1/15,1/16,14/17,15/18,16/19,2/20,2/21,2/22,3/23,3/24,3/25,4/26,4/27,4/28,5/29,5/30,5/31} {
        \draw[dotted] (V\i) edge (V\j);
      }

      \node[] () at (-1, 0) {$G = $};
      
      \draw [ultra thick, rounded corners=10pt, draw=black, fill opacity=0.2]
       (-0.5,-0.5) -- (8.5,-0.5) -- (8.5,0.5) -- (-0.5, 0.5) -- cycle;

    \end{tikzpicture}
    \caption{An expander which is not $\delta n$-balanced,
    as described and proved in Theorem~\ref{thm:expander which is not balanced}.
    The bottom oval represents the graph $G$,
    the middle vertices are the $\sqrt{n}$ new vertices,
    and the top vertex $r$ is the root vertex.}\label{figure:expander which is not balanced}
  \end{figure}

\section{Verifying Balancedness}\label{section:algorithms}

In this section we first give some evidence that deciding a given graph's balancedness is intractable
(see Theorem~\ref{theorem:hardness}).
We view this result as motivating the design of algorithms which can verify, with high probability,
whether a given graph is balanced.
Concretely,
we use the analysis described in Section~\ref{section:randomgraphs} and Section~\ref{section:expanders}
to devise two algorithms which produce \emph{certificates} of balancedness:
  the first algorithm,
  described in Section~\ref{section:algorithmone},
  is based on graph traversal,
  while the second algorithm,
  described in Section~\ref{section:algorithmtwo},
  is based on spectral analysis,
  and results in better running time compared to the first algorithm.

\para{Intractability Result.}\label{section:intractability}
Consider the following decision problem.

\vspace{-5px}
\probDef
  {Unbalancedness}
  {A graph $G$, number of facilities $k$, and a bound $s$.}
  {Is there a placement of $k$ facilities on $G$ such that the score of at least one of the players is strictly less than $s$?}
\vspace{-5px}

\textsc{Unbalancedness} can be solved in $O(n^{k + 3})$ time,
as follows.
First, compute all-pairs shortest paths on $G$
(this can be done in $O(n^3)$ time, using Floyd-Warshall algorithm, for example).
Then,
iterate over all ${n \choose k}$ possible placements of $k$ facilities on $G$,
computing the balancedness of each such placement using the all-pairs shortest paths matrix.

It is natural to ask whether there are better algorithms than the simple algorithm described above.
The next theorem, whose proof is deferred to the appendix, shows that this is not the case (assuming P $\neq$ NP).

\begin{restatable}{theorem}{theoremhardness}\label{theorem:hardness}
  \textsc{Unbalancedness} is NP-hard.
\end{restatable}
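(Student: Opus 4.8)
Since \textsc{Unbalancedness} asks whether \emph{some} player can be starved below a threshold $s$, the natural route is to encode a hard combinatorial question into the existence of a placement where one facility is ``dominated'' by the others. The earlier remark in the paper already hints at the mechanism: if a facility is placed on a vertex $v$ and all of $N(v)\setminus\{v\}$ carry other facilities, then the facility on $v$ gets score $1$. So the intractability should come from asking whether such a ``trap'' configuration exists somewhere in the graph, i.e. whether there is a small set that can be surrounded. A clean source for this is \textsc{Dominating Set} (or \textsc{Independent Set}/\textsc{Vertex Cover}): the question ``can $k-1$ facilities be placed so that some remaining vertex is within a controlled distance pattern'' is exactly a covering-type condition.

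First I would fix the reduction source as \textsc{Dominating Set}: given a graph $H$ on $m$ vertices and an integer $t$, decide whether $H$ has a dominating set of size $t$. I would then build $G$ from $H$ by taking the vertex set of $H$, keeping its edges, and adjoining a gadget vertex (or a small cluster of them) that plays the role of the ``victim'' facility. Set $k = t + 1$ and choose $s$ to be the score the victim receives precisely when the other $t$ facilities form a dominating set of $H$ (so that the victim's Voronoi cell collapses to just a bounded region). The equivalence to argue is: $G$ admits a placement of $k$ facilities with some player scoring below $s$ $\iff$ $H$ has a dominating set of size $t$. The forward direction uses the fact that to starve the victim, the other facilities must ``crowd in'' around it, which translates into a domination condition; the backward direction places the victim on the gadget and the $t$ others on a dominating set, computing that the victim's score hits the target.

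The key steps, in order, are: (i) describe the gadget and the exact construction of $G$ from $(H,t)$, ensuring $|V(G)| = \mathrm{poly}(m)$; (ii) compute, as a function of the placement, the score of the designated victim facility — this is where the distance structure of $G$ must be pinned down precisely, so that the score depends only on which vertices of $H$ are ``covered''; (iii) set the threshold $s$ so that ``victim score $< s$'' is logically equivalent to ``the $t$ other facilities dominate $H$''; (iv) prove both directions of the equivalence; (v) observe that the construction is polynomial-time, concluding NP-hardness. A subtlety is that \textsc{Unbalancedness} lets the adversary choose \emph{which} player is starved and \emph{where all} facilities go, so I must make sure the gadget is the only viable place to create a low-scoring player — e.g. by padding $G$ with enough vertices (or enough symmetric bulk) that any facility not sitting on the victim gadget automatically accrues score at least $s$ regardless of the others. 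This padding/robustness argument is what I expect to be the main obstacle: one has to rule out all ``unintended'' starvation patterns, not just exhibit the intended one, and getting the arithmetic of $n/k$, the floor/ceiling slack, and the gadget sizes to line up cleanly will require care.

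Once (i)--(v) are in place, the theorem follows: a polynomial-time algorithm for \textsc{Unbalancedness} would decide \textsc{Dominating Set}, so \textsc{Unbalancedness} is NP-hard.
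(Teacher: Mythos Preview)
Your proposal is correct and matches the paper's approach essentially point for point: the paper reduces from \textsc{Dominating Set}, adds a single ``victim'' vertex $v$ adjacent to all original vertices, sets $k=h+1$, and implements your ``padding/robustness'' idea by attaching a clique of $n^3$ bag vertices to each original vertex's closed neighborhood so that any facility not on $v$ automatically scores at least $n^2$. The equivalence you outline in steps (ii)--(iv) is exactly the argument the paper gives, and the subtlety you flag (ruling out unintended low-scoring placements, including facilities on bag vertices) is precisely where the paper spends its effort in the converse direction.
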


\begin{remark}
Taking a closer look at the proof of Theorem~\ref{theorem:hardness},
one might notice that in the reduction from \textsc{Dominating Set} to \textsc{Unbalancedness},
the number of players $k$ in the \textsc{Unbalancedness} instance is $O(h)$,
where $h$ is the size of the dominating set in the \textsc{Dominating Set} instance.
Since it is known that,
assuming the Exponential Time Hypothesis (ETH),
no algorithm running in $n^{o(k)}$ time exists for \textsc{Dominating Set}~\cite{eth1},
it follows that no algorithm running in $n^{o(k)}$ time exists for \textsc{Unbalancedness}.
\end{remark}

\subsection{Graph Traversal Algorithm}\label{section:algorithmone}

In this section we describe an algorithm which verifies whether a given graph is balanced.
Specifically,
if the algorithm accepts then it generates a \emph{certificate},
such that the graph is guaranteed to be balanced.
Otherwise,
if the algorithm rejects,
then there are no balancedness guarantees;
notice that it does not mean that the graph is not balanced,
as it might be the case that the graph is balanced but the algorithm is not able to produce a certificate.
The algorithm runs in $O(n (m + n))$,
where $n$ is the number of vertices in the graph and $m$ is the number of edges in the graph;
thus,
from the intractability result proved in Theorem~\ref{theorem:hardness},
we have no reason to believe that the algorithm would produce a certificate for all graphs.
Using the analysis described in the proof of Theorem~\ref{thm:random graphs are balanced},
however,
we will show that,
with high probability,
the algorithm is able to produce certificates for random graphs.

\begin{theorem}\label{theorem:algorithmone}
  Let $d$ be such that there is no positive integer $i$ for which $d = n^{\frac{1}{i}}$
  and let $k = c_1 d$ for some constant $c_1$. 
  Then,
  there is an algorithm running in $O(n (m + n))$ time which,
  given a random graph $G \sim G_{n,d}$ with $m$ edges,
  with high probability produces a \emph{certificate} of its $\delta n$-balancedness for $k$ players
  where $\delta \geq c_1 c_2$ and $c_2$ is some constant.
\end{theorem}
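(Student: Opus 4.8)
The plan is to turn the analysis in the proof of Theorem~\ref{thm:random graphs are balanced} into an algorithm that \emph{checks}, by explicit traversal, the neighborhood-growth facts that the proof establishes probabilistically, and then uses these checked facts to emit a balancedness certificate. Concretely, the certificate will be: (i) the integer $i$ such that $n^{1/i} < d < n^{1/(i-1)}$ (this is Case~1 of Theorem~\ref{thm:random graphs are balanced}, which is exactly the hypothesis $d \neq n^{1/i}$ for any positive integer $i$); and (ii) for \emph{every} vertex $v$, a witness that $|N_{i-1}(v)| = O(d^{i-1}) = o(n)$ while $|N_i(v)| = n$ (i.e.\ $G$ is connected and every ball of radius $i$ is everything). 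Given (i) and (ii), the verification argument is precisely the one in Case~1: for any two facilities $v_1, v_2$, all but $|N_{i-1}(v_1)| + |N_{i-1}(v_2)| = O(d^{i-1})$ vertices lie in $\tilde N_i(v_1)\cap\tilde N_i(v_2)$ and hence are equidistant from both; for $k = c_1 d$ players the discrepancy is at most $k\cdot O(d^{i-1}) = c_1 d \cdot O(n/d) = O(c_1 n)$, which is $\delta n$ with $\delta \ge c_1 c_2$ for an appropriate absolute constant $c_2$.

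First I would specify the algorithm: run BFS from every vertex $v$, recording $|N_1(v)|, |N_2(v)|, \dots$ until the ball stops growing; each BFS costs $O(m+n)$, so the total is $O(n(m+n))$ as claimed. The algorithm accepts and outputs the certificate iff there is a single value $i$ (the same for all $v$) with $|N_{i-1}(v)| \le C d^{i-1}$ for a fixed threshold constant $C$ and $|N_i(v)| = n$ for every $v$. I would then prove two things. Correctness (soundness): if the algorithm accepts, then the deterministic counting argument of Case~1 above shows every placement of $k$ players is $\delta n$-balanced — this uses nothing probabilistic, only the inclusion $\tilde N_i(v_1)\cap\tilde N_i(v_2) \subseteq \{w : d(w,v_1)=d(w,v_2)\}$ and a union bound over the (at most $k$) complements. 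Completeness for random graphs: by the neighborhood-size estimates already proved inside Theorem~\ref{thm:random graphs are balanced} (the three-case dichotomy for $n_i = |\tilde N_i(v)|$, together with the union bound over all vertices stated there), with high probability $G \sim G_{n,d}$ satisfies exactly $|N_{i-1}(v)| = d^{i-1} \pm \tilde O(\sqrt{d^{i-1}}) = o(n)$ and $|N_i(v)| = n - \tilde O(\sqrt n)$; one extra short step upgrades "$n - \tilde O(\sqrt n)$" to "$=n$", e.g.\ by noting that the last gained neighborhood $\tilde N_{i+1}(v)$ is nonempty-to-all with high probability, so every ball of radius $i$ (or $i+1$, adjusting $i$ by one) covers $V$. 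Hence the accepting condition holds w.h.p.

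The main obstacle I expect is making the completeness argument fully rigorous at the boundary between "almost all vertices" and "all vertices": Theorem~\ref{thm:random graphs are balanced} is content with $|N_i(v)| = n - \tilde O(\sqrt n)$ and absorbs the slack into $\delta n$, but an algorithm that certifies via a fixed radius $i$ needs the balls to be \emph{exactly} $V$, or else to carry the $\tilde O(\sqrt n)$ leftover vertices through the counting at radius $i+1$. I would handle this by letting the certificate use radius $i$ for the "small" side and radius $i+1$ for the "covers everything" side (the BFS data contains both), and then redo the Case~1 inequality with $\tilde N_i(v_1)\cap \tilde N_i(v_2) \cup \tilde N_{i+1}(v_1)\cap\tilde N_{i+1}(v_2)$; the discrepancy picks up an extra $\tilde O(\sqrt d)$ term exactly as in Case~2 of Theorem~\ref{thm:random graphs are balanced}, which is negligible against $\delta n$. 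A secondary point to check is that the constant $c_2$ is genuinely absolute (independent of $i$): since $d^{i-1} = n/d$ whenever $d = \Theta(n^{1/i})$, and more generally $d^{i-1} \le n/d \cdot (1+o(1))$ in Case~1, the bound $k\cdot C d^{i-1} \le C c_1 n (1+o(1))$ gives $c_2 = C(1+o(1))$ uniformly, so this is fine. The remaining details — Chernoff tail bounds, union bound over $n$ vertices and $O(\log n)$ radii, and the BFS running-time accounting — are routine and already carried out in the cited proof.
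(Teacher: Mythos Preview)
Your approach coincides with the paper's: run BFS from every vertex, accept iff at some uniform radius $r$ all balls equal $V$ while the $(r{-}1)$-balls are uniformly small, prove soundness by the Case~1 inclusion--exclusion count over the $k$ facilities, and prove completeness via the neighborhood-size estimates already established in Theorem~\ref{thm:random graphs are balanced}. The paper parameterizes by the graph's computed radius $r$ and phrases Condition~2 as $|N_{r-1}(v)|\le \delta n/k$, whereas you fix $i$ from $d$ and use the threshold $C\,d^{i-1}$; this is cosmetic. You are in fact more careful than the paper about upgrading ``$|N_i(v)|=n-\tilde O(\sqrt n)$'' to exact coverage --- the paper simply asserts $|N_r(v)|=n$ holds w.h.p.\ and moves on.

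One quantitative slip to flag (which the paper's proof shares): your claim that $d^{i-1}\le (n/d)(1+o(1))$ in Case~1 is false. With $d=n^\epsilon$ and $\tfrac{1}{i}<\epsilon<\tfrac{1}{i-1}$ one has $d^{i-1}=n^{\epsilon(i-1)}>n^{1-\epsilon}=n/d$, with equality only at the excluded boundary $\epsilon=\tfrac{1}{i}$; the paper makes the equivalent unjustified assertion $|N_{r-1}(v)|\le c_2 n/d$. The fix is to sharpen the soundness step rather than the completeness step: a given player can gain over $1/k$ only on vertices inside \emph{its own} $(r{-}1)$-ball (any vertex outside it but inside another player's $(r{-}1)$-ball contributes $0$ to that player), so the score deviation from $n/k$ is bounded by $\max_j |N_{r-1}(v_j)|$, not by $k$ times that quantity. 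With this tighter bound, Condition~2 can be relaxed to $|N_{r-1}(v)|\le \delta n$, which does hold w.h.p.\ since $d^{i-1}=o(n)$ throughout Case~1, and the constant~$c_2$ becomes genuinely absolute.
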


\begin{proof}
Let us denote the radius of G by $r$;
that is,
$r = \min_{v \in V} \max_{u \in V} d(u, v)$.
The algorithm first computes the value of $|N_i(v)|$ for each $v \in V$ and for each $i \in [r]$;
to compute these values,
it is sufficient to traverse the graph once from each vertex,
say, by performing a breadth-first search.
Then,
the algorithm accepts iff the following conditions hold:

\begin{enumerate}[label=\textbf{Condition \arabic*.}, leftmargin=*]

\item
$|N_r(v)| = n$ for each $v \in V$.

\item
$|N_{r - 1}(v)| \leq \delta n / k$ for each $v \in V$.

\end{enumerate}

It might be useful to think about this algorithm as generating a table $T$ (or a matrix) of $n$ rows and $r$ columns,
where $n$ is the number of vertices in the graph and $r$ is the radius of the graph.
The value in the cell $T[v, i]$ is the size of the $i$th neighborhood of vertex $v$;
that is, $T[v, i] := |N_i(v)|$.
Given the table $T$,
the algorithm checks,
in Condition 1,
that the values in the $r$th column (that is, in the ``last'' column) are all exactly $n$;
in other words, for the given graph it holds that $r = \max_{v \in V} \max_{u \in V} d(u, v)$,
so each vertex reaches the whole graph on $r$ hops.
Then,
in Condition 2,
the algorithm checks that the values in the $(r - 1)$th column are at most $\delta n / k$;
in other words, for the given graph it holds that each vertex reaches at most $\delta n / k$ vertices in $r - 1$ hops.

By analysis similar to that in the proof of Theorem~\ref{thm:random graphs are balanced},
we have that if these conditions hold,
indeed the graph is $\delta n$ balanced.
To see why,
notice that combining both conditions,
for any $k$ players,
we have that all players reach at least $n - \delta n$ vertices in the $r$th hop,
and on these vertices, all players get the same fraction of $1 / k$.
Thus,
the score of each player is at least $(n - \delta n) / k \geq n / k - \delta n$ and at most $\delta n + (n - \delta n) / k \leq \delta n + n / k$.
It follows that the graph is indeed $\delta n$-balanced.

Taking a closer look at Case 1 in the proof of Theorem~\ref{thm:random graphs are balanced},
it follows first that $|N_r(v)| = n$ holds for every vertex $v$ with high probability,
thus Condition 1 holds.
Further,
for every vertex satisfies $|N_{r - 1}(v)| \leq c_2 n / d$ for some constant $c_2$ with high probability.
Therefore,
\[
  |N_{r - 1}(v)| \leq \frac{c_2 n}{d} = \frac{c_1 c_2 n}{k} \leq \frac{\delta n}{k}
\]
holds with high probability for every vertex $v$,
thus Condition 2 holds.
~\qed\end{proof}

\subsection{Spectral Algorithm}\label{section:algorithmtwo}

In this section we describe another algorithm which verifies whether a given graph is balanced.
In contrast to the previous section, here we provide a  \emph{probabilistic certificate}.
Specifically,
we consider a randomized algorithm $A$ with the following promise:
  if,
  for a given graph $G$ as an input,
  algorithm $A$ accepts with probability larger than $0.9$,
  then $G$ is balanced.
We say that such an algorithm is a \emph{probabilistic certificate}.
Probabilistic certificates are useful since,
given a graph~$G$,
one could estimate the accepted probability by repeatedly running $A$.

Notice that there might be ``false negatives'',
as there could be balanced graphs with low acceptance probability.
What we do show next is that for a random graph $G$ drawn from $G_{n,d}$,
with high probability over the distribution $G_{n,d}$,
the graph $G$ is such that $A$ accepts $G$
(with probability larger than $0.9$ over the random bits $A$ uses).

Algorithm $A$ runs in $\tilde{O}(nd)$ where $d$ is the expected degree of $G$.
Thus,
following the intractability result proved in Theorem~\ref{theorem:hardness},
we have no reason to believe that the algorithm can produce certificates for all graphs.
Using the analysis of Theorem~\ref{thm:expanders are balanced},
however,
we will show that,
with high probability,
the algorithm is able to produce certificates for (most of the) random graphs.

\begin{theorem}\label{theorem:algorithmone}
  Let $d$ be such that there is no positive integer $i$ for which \linebreak $d = n^{\frac{1}{i}}$.
  There is an algorithm running in $\tilde{O}(nd)$ time which,
  given a random graph $G \sim G_{n,d}$,
  with high probability produces a \emph{probabilistic certificate} of its $\delta$-balancedness for some $\delta = o(n)$.
\end{theorem}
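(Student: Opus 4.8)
\para{Proof plan.} The plan is to turn (the roughly-regular extension of) Theorem~\ref{thm:expanders are balanced} into a certificate that can be verified in near-linear time. By that theorem, to certify that $G$ is $o(n)$-balanced for $k=o(d)$ players it suffices to certify two things: that $G$ is roughly $d$-regular, and that its normalized spectral expansion is small, i.e. $\hat\lambda:=\max_{i\ge 2}|\lambda_i(\hat A)|=O(1/\sqrt d)$ for the normalized adjacency matrix $\hat A=D^{-1/2}A_G D^{-1/2}$; the latter is exactly the quantity controlled by the general (irregular) expander mixing lemma invoked in the remark following Theorem~\ref{thm:expanders are balanced}. Both of these can be extracted from a single spectral estimation.

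\para{The algorithm.} I would have $A$ first compute all degrees and the average degree in $O(n+m)$ time, and reject unless $d-o(d)\le d_v\le d+o(d)$ for every $v$. The crucial point is that the Perron eigenvector of $\hat A$ is \emph{known in closed form}, namely $D^{1/2}\vec 1$ with eigenvalue $1$, so one can project it out exactly and work with $M=\hat A-\tfrac{1}{\sum_v d_v}\,(D^{1/2}\vec 1)(D^{1/2}\vec 1)^{\top}$, whose spectrum is $\{0\}\cup\{\lambda_i(\hat A):i\ge 2\}$, so that $\|M\|_{\mathrm{op}}=\hat\lambda$. The algorithm then draws a uniformly random unit vector $x\in\mathbb R^n$, computes $y=M^{t}x$ for $t=\Theta(\log^{2}n)$ by $t$ sparse matrix--vector products (each costing $O(m+n)$: $O(m)$ for the multiplication by $A_G$, and $O(n)$ for the diagonal scalings and the rank-one term), and accepts iff $\|y\|^{1/t}\le C/\sqrt d$ for a suitably chosen constant $C$. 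The running time is $\tilde O(m)=\tilde O(nd)$, using $m=\tilde O(nd)$ with high probability for $G\sim G_{n,d}$.

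\para{Soundness and completeness.} The one fact doing the work is the textbook power-method estimate: for \emph{any} symmetric $M$, one always has $\|M^{t}x\|^{1/t}\le\|M\|_{\mathrm{op}}$, while with probability $1-1/\mathrm{poly}(n)$ over the random unit $x$ the overlap $|\langle x,v\rangle|$ with a top eigenvector $v$ of $M$ is at least $1/\mathrm{poly}(n)$, so $\|M^{t}x\|^{1/t}\ge(1-o(1))\|M\|_{\mathrm{op}}$ once $t=\omega(\log n)$. Thus, on the given $G$, the output is a $(1\pm o(1))$-factor estimate of $\hat\lambda$ with probability at least $0.99$. Consequently, if $A$ accepts $G$ with probability larger than $0.9$, then some execution both accepts and estimates correctly, which forces $\hat\lambda\le C'/\sqrt d$ for a constant $C'$; together with the deterministic near-regularity check and the roughly-regular version of Theorem~\ref{thm:expanders are balanced} (valid as long as $d=n^{\epsilon}$ with $\epsilon$ not of the form $1/i$ for a positive integer $i$, which is the hypothesis at hand), this certifies that $G$ is $o(n)$-balanced --- so $A$ is a legitimate probabilistic certificate for $G$. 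For the ``with high probability'' part, I would invoke the standard facts that a random $G\sim G_{n,d}$ is, with high probability, roughly $d$-regular (Chernoff, exactly as in the proof of Theorem~\ref{thm:random graphs are balanced}) and satisfies $\max(\lambda_2(G),|\lambda_n(G)|)=\Theta(\sqrt d)$~\cite{feige2005spectral,furedi1981eigenvalues,friedman1989second}, hence $\hat\lambda=\tilde O(1/\sqrt d)$; on such a typical $G$ the near-regularity check passes and, by the estimate above, $A$ accepts with probability at least $0.99>0.9$. Since both properties hold for all but an $o(1)$ fraction of $G_{n,d}$, the theorem follows.

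\para{Main obstacle.} The delicate points are, first, making the power-method analysis rigorous at these parameters --- lower bounding the random overlap $|\langle x,v\rangle|$ with overwhelming probability while keeping $t$ only polylogarithmic --- and, more importantly, tracking how the $\tilde O(\sqrt d)$ slack in the degrees of a random graph propagates through the normalization and then through the error terms of the general expander mixing lemma, so as to confirm that the neighborhood-size estimates feeding into Theorem~\ref{thm:expanders are balanced} remain within $o(n)$ of one another; essentially all the real content lives in verifying that this roughly-regular, $\tilde O(1/\sqrt d)$-spectral-expansion regime still falls within the scope of that theorem.
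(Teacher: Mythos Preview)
Your proposal is correct and follows essentially the same approach as the paper: check rough regularity, estimate the spectral gap via the power method in $\tilde O(nd)$ time, and appeal to (the roughly-regular extension of) Theorem~\ref{thm:expanders are balanced}. The only cosmetic difference is that the paper works directly with the unnormalized adjacency matrix $A_G$ and the threshold $\tilde\lambda_2<100\sqrt d$, citing the power method as a black box rather than spelling out the deflation of the top eigenvector as you do.
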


\begin{proof}
Let $\epsilon = 0.01$.
We say that $a$ is an $\epsilon$-estimation of $b$ if it holds that $a \leq (1 + \epsilon)b$.
Let $A_G$ denote the adjacency matrix of a graph $G$.
Given a graph $G$ as an input, the algorithm operates as follows.

\begin{enumerate}[label=\textbf{Step \arabic*.}, leftmargin=*]

\item
Set $d$ to be equal to the expected degree in $G$. 

\item
If the graph is not roughly-regular then reject.

\item
Calculate an $\epsilon$-estimation of the second largest-in-magnitude eigenvalue of $A_G$ and denote its value by $\tilde{\lambda_2}$.

\item
If $\tilde{\lambda_2}$ is smaller than $100\sqrt{d}$ then accept, otherwise reject.

\end{enumerate}
	
We start with a running time analysis of the above algorithm.
Clearly, Steps 1,2 and 4 can be accomplished in $O(nd)$ time.
For Step 3, we shall specify how to calculate $\tilde{\lambda_2}$.
Fortunately,
the problem of calculating $\tilde{\lambda_2}$ of a given graph is a well-studied problem which can be solved using the power method:
  the power method is a randomized algorithm that, using $t = \Omega (log n)$ operations of multiplying certain vectors by $A_G$,
  gives with high probability (say $p=0.9$) the desired $\epsilon$-estimation of the second largest-in-magnitude eigenvalue of $A_G$.
Given any vector $v$,
calculating $A_G \cdot v$ can be done in $O(dn)$ time
(to see this note that $G$ has roughly $\frac{d}{2}n$ edges, hence $A_G$ has roughly $dn$ non zero entries).
In total, Step 3 takes $tnd=\tilde{O}(dn)$ time.
For more details on the power method,
see, e.g.,~\cite[Lemma 8.1]{vishnoi2012laplacian}.
	
By standard calculations,
one can show that random graphs are roughly-regular graphs (see definition in Section~\ref{section:expanders}).
Hence,
with high probability $G$ is such that it (always) passes Step~1.
It is a well known fact that with high probability $\lambda_{2}\left(G\right)=\Theta\left(\sqrt{d}\right)$
(see~\cite{feige2005spectral,furedi1981eigenvalues,friedman1989second}, for example).
Hence, once again, with high probability $G$ is such that it will be accepted in Step~4 with probability $p$.
By Theorem~\ref{thm:expanders are balanced},
we have that if $G$ is a roughly-regular spectral expander,
then $G$ is $\delta n$ balanced; thus, we are done.
~\qed\end{proof}

\section{Outlook}\label{section:outlook}

We briefly discuss several directions for future research.

\para{Robustness of balanced graphs.}
Random graphs might be seen as noisy complete graphs,
in the sense that,
starting from a complete graph,
each edge is removed with a certain probability ($1 - p$).
Since complete graphs are $0$-balanced for any number of players $k$,
one might understand Theorem~\ref{thm:random graphs are balanced} as proving
that complete graphs are not only $0$-balanced for any $k$,
but are also \emph{robust} to noise.
It is not clear whether other families of graphs with good balancedness properties
are also robust to noise.

\para{Cooperative, randomized, and adversarial tie-breaking.}
In this paper we broke ties evenly,
which has similar behavior as to breaking ties uniformly at random;
that is,
if $z$ facilities are the closest to a vertex,
then each of the corresponding players gets a score increase of $1 / z$.
It is natural,
and practically motivated,
to study other tie-breaking schemes,
such as
(1) breaking ties in favor of the balancedness;
(2) breaking ties at random by some distribution over the players;
(3) breaking ties against balancedness.

\para{Best-case, average-case, and worst-case balancedness.}
In this paper we concentrated on a worst-case notion of balancedness,
by defining a graph to be balanced iff \emph{each} placement on it is balanced.
What happens if we consider an average-case notion of balancedness,
by requiring a placement on it to be balanced with high probability,
or even a best-case notion of balancedness,
by requiring that at least one balanced placement shall exist?
We observe that some graphs are not balanced even for the best-case notion;
for example,
notice that \emph{any} placement of two facilities on the graph depicted in Figure~\ref{figure:unbalanced} is not $0$-balanced
(the figure and a proof of the claim are given in the Appendix).

\paragraph*{Acknowledgements.}

We thank Uriel Feige for useful discussions.

\bibliographystyle{plainurl}
\bibliography{bib}

\appendix

\newpage

\section{Useful Facts}

\begin{theorem}[Chernoff~\cite{chernoff1952measure}]\label{thm:Chernoff}
  Let $\{X_{1},\ldots,X_{n}\}$ be independent random variables with $0 \leq X_{i} \leq 1$, for each $i$.
  Let $X = \sum_{i=1}^{n} X_{i}$,
  $\mu = \mathbb{E} [ X ]$,
  and $\sigma^{2} = VAR ( X )$.
  Then,
  it holds that
  \[
	\Pr[|X-\mu|\ge\delta\sigma]\le C\max\{ \exp(-c\delta^{2}),\exp(-c\delta\sigma)\},
  \]
  for some absolute constants $c$ and $C$,
  for every $\delta \geq 0$.
\end{theorem}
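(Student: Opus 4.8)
The plan is to prove this by the classical Cram\'er--Chernoff exponential-moment method, which yields the Bernstein/Bennett-type form stated here; if one prefers, the inequality is standard and one may simply cite it. Assume $\sigma > 0$ (otherwise $X$ is almost surely constant and the claim is trivial, since we may take $C \ge 1$), write $\mu_i := \mathbb{E}[X_i]$ and $Y_i := X_i - \mu_i$, and note that each $Y_i$ is centered, satisfies $|Y_i| \le 1$ almost surely, and has $\mathbb{E}[Y_i^2] = \sigma_i^2$ with $\sum_{i=1}^n \sigma_i^2 = \sigma^2$ by independence. For the upper tail and any $s > 0$, Markov's inequality applied to $e^{s(X-\mu)}$ together with independence gives
\[
  \Pr[X - \mu \ge t] \;\le\; e^{-st}\prod_{i=1}^{n}\mathbb{E}\bigl[e^{sY_i}\bigr].
\]
The first step is the standard moment bound: since $|Y_i| \le 1$ we have $\mathbb{E}[Y_i^k] \le \mathbb{E}[Y_i^2] = \sigma_i^2$ for every $k \ge 2$, so $\mathbb{E}[e^{sY_i}] = 1 + \sum_{k \ge 2} \frac{s^k \mathbb{E}[Y_i^k]}{k!} \le 1 + \sigma_i^2(e^s - 1 - s) \le \exp\bigl(\sigma_i^2(e^s - 1 - s)\bigr)$, and multiplying over $i$ yields $\mathbb{E}[e^{s(X-\mu)}] \le \exp\bigl(\sigma^2(e^s - 1 - s)\bigr)$.

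Next I would optimize the resulting bound $\exp\bigl(\sigma^2(e^s - 1 - s) - st\bigr)$ over $s > 0$; the choice $s = \ln(1 + t/\sigma^2)$ gives $\Pr[X - \mu \ge t] \le \exp\bigl(-\sigma^2 h(t/\sigma^2)\bigr)$ with $h(u) := (1+u)\ln(1+u) - u$. Substituting $t = \delta\sigma$, so that $\sigma^2 h(t/\sigma^2) = \sigma^2 h(\delta/\sigma)$, and invoking the elementary inequality $h(u) \ge \frac{u^2/2}{1 + u/3}$, one obtains $\sigma^2 h(\delta/\sigma) \ge \frac{\delta^2/2}{1 + \delta/(3\sigma)}$; a two-line case split on whether $\delta \le \sigma$ or $\delta > \sigma$ bounds this below by $\frac{3}{8}\min\{\delta^2, \delta\sigma\}$. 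Hence $\Pr[X - \mu \ge \delta\sigma] \le \exp\bigl(-\tfrac{3}{8}\min\{\delta^2, \delta\sigma\}\bigr)$. Running the identical argument with $X_i$ replaced by $1 - X_i$ (same variance $\sigma_i^2$, and $\sum_i(1-X_i) = n - X$) controls the lower tail $\Pr[\mu - X \ge \delta\sigma]$ by the same quantity, and a union bound gives the theorem with $C = 2$ and $c = \tfrac{3}{8}$, using $\exp\bigl(-c\min\{\delta^2,\delta\sigma\}\bigr) = \max\{\exp(-c\delta^2), \exp(-c\delta\sigma)\}$.

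The argument is entirely routine; the only places calling for care are the moment-generating-function estimate (the ``Bernstein trick'' of dominating every higher moment $\mathbb{E}[Y_i^k]$ by the variance, which is valid precisely because $|Y_i| \le 1$) and the elementary lower bound on the rate function $h$ combined with the $\delta \le \sigma$ versus $\delta > \sigma$ case analysis needed to recover the clean $\max\{\exp(-c\delta^2), \exp(-c\delta\sigma)\}$ shape with absolute constants. I expect this bookkeeping, rather than any conceptual difficulty, to be the main obstacle.
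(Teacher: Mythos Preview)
Your argument is correct and is the standard Bennett/Bernstein derivation via the Cram\'er--Chernoff method; the moment-generating-function bound, the optimization yielding the rate function $h(u)=(1+u)\ln(1+u)-u$, the elementary lower bound $h(u)\ge \tfrac{u^2/2}{1+u/3}$, and the case split on $\delta\lessgtr\sigma$ all go through as you describe, giving the stated form with $C=2$ and $c=\tfrac{3}{8}$.

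There is nothing to compare against: the paper does not prove this theorem at all. It is listed in the appendix under ``Useful Facts'' and merely stated with a citation to Chernoff's original paper, to be invoked as a black box elsewhere (e.g., in the proof of Theorem~\ref{thm:random graphs are balanced}). Your write-up therefore supplies strictly more than the paper does; if anything, your opening remark that ``the inequality is standard and one may simply cite it'' is exactly the route the authors take.
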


\begin{theorem}[Bernoulli's Inequality]\label{thm:Bernoulli}
  The following claims hold.

  \begin{enumerate}

  \item
  For $x \geq -1$ and $r > 1$,
  it holds that
  \[
    (1 + x)^r \geq 1 + xr.
  \]

  \item
  For $x \in [ -1, 1 / (r - 1) )$ and $r > 1$,
  it holds that
  \[
    (1 + x)^r \leq 1 + \frac{rx}{1 - (r - 1) x}.
  \]

  \end{enumerate}

\end{theorem}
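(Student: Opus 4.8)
I would prove the two claims separately, in each case reducing the inequality to the statement that a suitable smooth one-variable function has a global extremum at $x=0$.

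For the first claim, the plan is to use convexity: set $g(x) = (1+x)^r$ on $(-1,\infty)$ and note $g''(x) = r(r-1)(1+x)^{r-2} \ge 0$ since $r>1$, so $g$ is convex there. A convex function lies on or above every tangent line, and the tangent to $g$ at $x=0$ is exactly $y = 1 + rx$; hence $(1+x)^r \ge 1 + rx$ for $x > -1$. The endpoint $x=-1$ is checked by hand: the left side is $0$ and the right side is $1-r < 0$. (Equivalently one can study $f(x) = (1+x)^r - 1 - rx$, observe $f(0)=0$, and note $f'(x) = r\bigl((1+x)^{r-1}-1\bigr)$ is negative on $(-1,0)$ and positive on $(0,\infty)$, so $0$ is the global minimum.)

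For the second claim, I would first rewrite the right-hand side over a common denominator: on the stated range $1-(r-1)x > 0$, and $1 + \frac{rx}{1-(r-1)x} = \frac{1+x}{1-(r-1)x}$. Since $1+x \ge 0$ on $[-1, 1/(r-1))$, clearing the positive denominator and (for $x>-1$) dividing by $1+x$ shows the desired bound is equivalent to $h(x) := (1+x)^{r-1}\bigl(1-(r-1)x\bigr) \le 1$ on $(-1, 1/(r-1))$, with $x=-1$ giving the equality $0=0$ directly. A short computation gives $h'(x) = -r(r-1)\,x\,(1+x)^{r-2}$, which is positive on $(-1,0)$ and negative on $(0,\infty)$; hence $h$ is maximized on the interval at $x=0$, where $h(0)=1$, which is exactly what is needed.

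The arguments are short, so there is no real ``main obstacle''; the only points needing care are the endpoint behaviour --- the value $x=-1$, where both sides of each inequality must be verified separately, and the right endpoint $x=1/(r-1)$ of the second claim, where the bound blows up and one only needs the strict inequality $1-(r-1)x>0$ to justify clearing denominators --- together with checking that the sign analysis of $h'$ and $f'$ still goes through when $1<r<2$, where $r-2<0$ but $(1+x)^{r-2}$ remains well-defined and positive for $x>-1$. Alternatively, both inequalities are instances of the real-exponent generalized Bernoulli inequality, so one could instead cite a standard reference; I would keep the self-contained calculus proof for completeness.
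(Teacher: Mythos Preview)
Your proof is correct and self-contained. The paper, however, does not prove this statement at all: it lists Bernoulli's inequality among the ``Useful Facts'' in the appendix and simply states both parts without argument, treating them as standard results to be quoted. So there is no paper proof to compare against; your calculus argument (convexity/tangent-line for part~1, and the sign analysis of $h(x)=(1+x)^{r-1}\bigl(1-(r-1)x\bigr)$ for part~2) is a clean way to supply what the paper omits, and the endpoint checks you flag are handled correctly.
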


\begin{lemma}[{Spectral to Vertex Expansion~\cite[Chapter 4]{vadhan2012pseudorandomness}}]\label{lem:Spectral to vertex expansion}
  Let $G = (V, E)$ be a $d$-regular $\lambda$-expander graph with $n$ vertices.
  Then,
  for every $\alpha \in [ 0, 1 ]$,
  it holds that $G$ is an $( \alpha n, \frac{1}{( 1 - \alpha ) ( \frac{\lambda}{d} )^2 + \alpha} )$
  vertex expander.
\end{lemma}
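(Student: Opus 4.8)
The plan is to run the standard second-moment (spectral) argument, which bounds the support of $A_G \vec{1}_S$ from below via Cauchy--Schwarz. Fix any $S \subseteq V$ with $|S| \le \alpha n$ and set $\beta := |S|/n$. I would first reduce to proving the cleaner inequality $|N(S)| \ge \frac{1}{(1-\beta)(\lambda/d)^2 + \beta}\,|S|$ for this particular $S$: since $\lambda \le \lambda_1(G) = d$, the quantity $(1-\beta)(\lambda/d)^2 + \beta$ is non-decreasing in $\beta$, so its reciprocal is non-increasing, and therefore the bound at $\beta$ implies the claimed bound at the target value $\alpha \ge \beta$. (The boundary cases $\alpha \in \{0,1\}$ are trivial: the empty set, and $S \subseteq N(S)$, respectively.)

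For the main inequality, I would decompose the indicator vector as $\vec{1}_S = \beta \vec{1} + y$, where $\vec{1}$ is the all-ones vector and $y \perp \vec{1}$; a direct computation gives $\|y\|^2 = |S| - \beta^2 n = \beta(1-\beta)n$. Applying the adjacency matrix and using $d$-regularity ($A_G \vec{1} = d\vec{1}$) together with the spectral bound $\|A_G y\| \le \lambda\|y\|$ (valid because $y$ is orthogonal to the top eigenvector $e_1(G) = \vec{1}/\sqrt{n}$), orthogonality yields
\[
  \|A_G \vec{1}_S\|^2 = \beta^2 d^2 n + \|A_G y\|^2 \le \beta^2 d^2 n + \lambda^2 \beta(1-\beta) n .
\]

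Next I would observe that the vector $A_G \vec{1}_S$ is supported inside $N(S)$, since its $v$-th coordinate counts the neighbors of $v$ lying in $S$, which is nonzero only for $v \in N(S)$; and its coordinates sum to $\sum_{v} (A_G \vec{1}_S)_v = \sum_{s \in S} d_s = d|S| = \beta d n$. Cauchy--Schwarz over the at most $|N(S)|$ nonzero coordinates then gives $(\beta d n)^2 \le |N(S)| \cdot \|A_G \vec{1}_S\|^2$. Substituting the displayed bound and simplifying $\frac{\beta^2 d^2 n^2}{\beta^2 d^2 n + \lambda^2\beta(1-\beta)n} = \frac{\beta n}{\beta + (1-\beta)(\lambda/d)^2}$ gives $|N(S)| \ge \frac{|S|}{(1-\beta)(\lambda/d)^2 + \beta}$, as wanted, and the monotonicity reduction of the first paragraph completes the proof.

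There is no real obstacle here---the argument is entirely routine---so the only points requiring care are: (i) the monotonicity reduction from ``$|S| \le \alpha n$'' to a single value of $\beta$; (ii) keeping the sign conventions straight, namely that $\lambda = \max(\lambda_2(G), |\lambda_n(G)|)$ is exactly the operator norm of $A_G$ on $\vec{1}^{\perp}$; and (iii) noting that the paper's convention $S \subseteq N(S)$ is harmless, since the support of $A_G \vec{1}_S$ sits inside $N(S)$ either way. One could alternatively derive the same bound from the sharp form of the expander mixing lemma applied to the pair $(S, \overline{N(S)})$, which spans no edges; I expect to present the self-contained Cauchy--Schwarz version.
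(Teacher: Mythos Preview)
Your proof is correct and is precisely the standard second-moment/Cauchy--Schwarz argument; note, however, that the paper does not supply its own proof of this lemma at all---it is stated in the appendix as a ``Useful Fact'' with a citation to Vadhan's monograph, so there is nothing to compare against. Your write-up matches the cited source essentially line-for-line, including the monotonicity reduction from $|S|\le\alpha n$ to the exact density $\beta=|S|/n$.
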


\begin{lemma}[Expander Mixing Lemma~\cite{alon1988explicit}]\label{lem:Mixing Lemma}
  Let $G = (V, E)$ be a $d$-regular $\lambda$-expander graph with $n$ vertices.
  Then,
  it holds that
  \[
	|E_{G}(S,T)-\frac{d\cdot|S|\cdot|T|}{n}|\leq\lambda\sqrt{|S|\cdot|T|},
  \]
  for any two,
  not necessarily disjoints subsets $S ,T \subseteq V$.
\end{lemma}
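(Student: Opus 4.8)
The plan is to use the standard ``quadratic form'' argument: $E_{G}(S,T)$ is a bilinear form in the adjacency matrix $A_{G}$, the contribution of its top eigenvector is exactly the ``expected'' value $d\cdot|S|\cdot|T|/n$, and the remaining contribution is controlled by the spectral gap. I would carry this out in three short steps.

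First, I would set up the linear algebra. Let $\mathbf{1}_{S},\mathbf{1}_{T}\in\{0,1\}^{n}$ be the indicator vectors of $S$ and $T$, and note that $\mathbf{1}_{S}^{\top}A_{G}\mathbf{1}_{T}=\sum_{u\in S}\sum_{v\in T}(A_{G})_{u,v}$ equals $E_{G}(S,T)$ with exactly the paper's convention: a pair of distinct vertices $u,v\in S\cap T$ joined by an edge is counted once through the term $(u,v)$ and once through $(v,u)$, i.e. twice, while an edge with one endpoint in $S\setminus T$ and one in $T\setminus S$ is counted once. Since $A_{G}$ is symmetric it admits the orthonormal eigenbasis $e_{1}(G),\ldots,e_{n}(G)$ with real eigenvalues $\lambda_{1}(G)\ge\cdots\ge\lambda_{n}(G)$, and because $G$ is $d$-regular the all-ones vector is an eigenvector with eigenvalue $d$, so $e_{1}(G)=\tfrac{1}{\sqrt{n}}\mathbf{1}$ and $\lambda_{1}(G)=d$. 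Writing $\mathbf{1}_{S}=\sum_{i}\alpha_{i}e_{i}(G)$ and $\mathbf{1}_{T}=\sum_{i}\beta_{i}e_{i}(G)$, the top coefficients are $\alpha_{1}=\langle\mathbf{1}_{S},e_{1}(G)\rangle=|S|/\sqrt{n}$ and $\beta_{1}=|T|/\sqrt{n}$.

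Second, expanding the form against this eigenbasis and peeling off the $i=1$ term gives
\[
  E_{G}(S,T)=\sum_{i=1}^{n}\lambda_{i}(G)\,\alpha_{i}\beta_{i}=\frac{d\cdot|S|\cdot|T|}{n}+\sum_{i=2}^{n}\lambda_{i}(G)\,\alpha_{i}\beta_{i}.
\]
Third, I would bound the tail sum. By the definition of a $\lambda$-expander all of $\lambda_{2}(G),\ldots,\lambda_{n}(G)$ lie in $[-\lambda,\lambda]$, so using Cauchy--Schwarz and then Parseval ($\sum_{i}\alpha_{i}^{2}=\|\mathbf{1}_{S}\|^{2}=|S|$, and likewise for $T$),
\[
  \Big|\sum_{i=2}^{n}\lambda_{i}(G)\,\alpha_{i}\beta_{i}\Big|\le\lambda\sum_{i=2}^{n}|\alpha_{i}|\,|\beta_{i}|\le\lambda\Big(\sum_{i=2}^{n}\alpha_{i}^{2}\Big)^{1/2}\Big(\sum_{i=2}^{n}\beta_{i}^{2}\Big)^{1/2}\le\lambda\sqrt{|S|\cdot|T|}.
\]
Combining the two displays yields $\big|E_{G}(S,T)-d\cdot|S|\cdot|T|/n\big|\le\lambda\sqrt{|S|\cdot|T|}$, which is the statement.

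I do not expect a real obstacle here: the whole argument is the spectral theorem for the symmetric matrix $A_{G}$ plus one application of Cauchy--Schwarz. The only places that deserve a moment of care are verifying that $\mathbf{1}_{S}^{\top}A_{G}\mathbf{1}_{T}$ reproduces the paper's (slightly nonstandard) double-counting convention on $S\cap T$, and noticing that the tail bound only needs the crude inequality $\sum_{i\ge 2}\alpha_{i}^{2}\le\sum_{i\ge 1}\alpha_{i}^{2}=|S|$ rather than anything sharper.
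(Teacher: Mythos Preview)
Your proof is correct and is exactly the standard spectral argument for the Expander Mixing Lemma. Note, however, that the paper does not give its own proof of this statement: it is listed in the ``Useful Facts'' appendix with a citation to~\cite{alon1988explicit} and is used as a black box, so there is no proof in the paper to compare against.
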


\section{Missing Proofs}

We provide proofs missing from the main text.

\subsection{Proof of Theorem~\ref{thm:expanders are balanced}}

\theoremexpanders*

\begin{proof}
The proof structure is somewhat similar to the proof of Theorem~\ref{thm:random graphs are balanced},
and the overall tactics is to argue that the sizes of the $i$th neighborhoods of most graphs are roughly similar.
The main difference between the current proof and the proof of Theorem~\ref{thm:random graphs are balanced} is that,
while the proof of Theorem~\ref{thm:random graphs are balanced} uses probabilistic arguments to estimate the neighborhood sizes of various facilities,
here we naturally use the expanders' expansion properties.
For simplicity we present the proof for $k = 2$.

Let $v$ be a vertex of $G$. Since $G$ is $d$-regular, it follows by definition that
$n_{1} := |N(v)| = d$.
Further,
let $n_{2} := |\tilde{N}_{2}(v)|$,
and consider the following cases.

\begin{enumerate}

	\item If $\epsilon < \frac{1}{2}$,
	then $n_{2}=\Omega ( n_{1}d)$,
	by Lemma~\ref{lem:Spectral to vertex expansion} (which we apply by specifically taking $\alpha = 1/\sqrt{n}$).
	Since the graph is $d$-regular, it follows that $n_{2}=\Theta ( n_{1}d)$.
		
	\item If $\epsilon > \frac{1}{2}$, then $n_2=n-o(n)$.
	This holds since, by applying the expander mixing lemma (Lemma~\ref{lem:Mixing Lemma}),
	we have that
	\begin{align*}
		 \frac{d}{n}|\bar{N}_2(v)|d\le\lambda\sqrt{|\bar{N}_2(v)|d}.
	\end{align*}
	Thus, $$|\bar{N}_2(v)|=\Theta ((\frac{n}{d})^2) = o(n),$$ and, indeed, $n_2=n-o(n)$.
	
	\item Notice that the case where $d=\Theta(n^{\frac{1}{2}})$ is excluded explicitly in the theorem statement.
\end{enumerate}

\noindent Let $n_{i}=|\tilde{N}_{i}(v)|$. By repeating the above arguments
for $2<i\le\log n$, we have the following:
\begin{enumerate}
	\item If $d=o(n^{\frac{1}{i}})$ then $n_{i}=\Theta (d^{i})$.
	\item If $d=\omega(n^{\frac{1}{i}})$ and $d=o(n^{\frac{1}{i-1}})$
	then $|N_{i}(v)|=n-o(n)$.
	\item Again, the case where $d=\Theta(n^{\frac{1}{i}})$ is excluded.
\end{enumerate}

\noindent
Notice that $n_{i}$ is as above,
for every $i$ and every vertex $v$ in $G$.

\medskip

By the assumption on $\epsilon$ it holds that  $\epsilon > \frac{1}{i}$ and $\epsilon < \frac{1}{i-1}$
for some $i$. Consider two players, placing their facilities on arbitrary vertices $v_{0}$ and $v_{1}$.
It follows that
\[
\tilde{N}_{i}(v_0) = \tilde{N}_{i}(v_1)=n-\Theta(d^{i-1})
\]
and therefore
\begin{align}
|\{ w\,:\, d_{G}(w,v_0)=d_{G}(w,v_1)\} | & \ge|\tilde{N}_{i}(v_0)\bigcap\tilde{N}_{i}(v_1)|\label{eq:B21} \\
& =n-|\bar{\tilde{N}}_{i}(v_0)\bigcup\bar{\tilde{N}}_{i}(v_0)|\nonumber\\
& \ge n-|\bar{\tilde{N}}_{i}(v_0)|-|\bar{\tilde{N}}_{i}(v_1)|\nonumber \\
& =n-\Theta(d^{i-1}).\nonumber
\end{align}

Using the above equation,
we obtain that the score of each player is at least $n/2 - \Theta(d^{i-1})$ and at most $n/2 + \Theta(d^{i-1})$,
which,
for sufficiently large values of $n$,
means that the graph is $\delta n$-balanced, as needed.
~\qed\end{proof}

\subsection{Proof of Theorem~\ref{theorem:hardness}}

\theoremhardness*

\begin{figure}[t]
    \center
    \begin{tikzpicture}[draw=black!75, scale=1.8,-]
      \tikzstyle{original}=[circle,draw=black!80,minimum size=20pt,inner sep=0pt]
      \tikzstyle{bagvertex}=[circle,draw=black!80,minimum size=40pt,inner sep=0pt]
      \tikzstyle{vvertex}=[circle,draw=black!80,minimum size=50pt,inner sep=0pt]

      \foreach [count=\i] \pos / \text / \style in {
        {(2.5,3)}/$v$/vvertex,
        {(1.5,2)}/$v_1$/original,
        {(2,1)}/$v_2$/original,
        {(3,1)}/$v_3$/original,
        {(3.5,2)}/$v_4$/original,
        {(0.5,2)}/$\{v_1^b : b \in [n^3]\}$/bagvertex,
        {(1,0.5)}/$\{v_2^b : b \in [n^3]\}$/bagvertex,
        {(4,0.5)}/$\{v_3^b : b \in [n^3]\}$/bagvertex,
        {(4.5,2)}/$\{v_4^b : b \in [n^3]\}$/bagvertex}
      {
        \node[\style] (V\i) at \pos {\text};
      }

      \foreach \i / \j in {1/2,1/3,1/4,1/5,2/3,3/4} {
        \path[] (V\i) edge (V\j);
      }

      \foreach \i / \j in {2/6,2/7,3/6,3/7,3/8,4/7,4/8,5/9} {
        \draw[line width=2] (V\i) edge (V\j);
      }

    \end{tikzpicture}
    \caption{An example of the reduction described in the proof of Theorem~\ref{theorem:hardness}.
    The input graph for the \textsc{Dominating Set} instance is composed of the vertices $v_1$, $v_2$, $v_3$, and $v_4$,
    which are connected by the edges $\{v_1, v_2\}$ and $\{v_2, v_3\}$.
    $h$ is set to $2$.
    The figure shows the reduced instance to \textsc{Unbalancedness}
    where a \textbf{bold} edge connecting an original vertex to a bag represents $n^3$ edges,
    connecting the original vertex to all the corresponding bag vertices.}\label{figure:reduction}
  \end{figure}
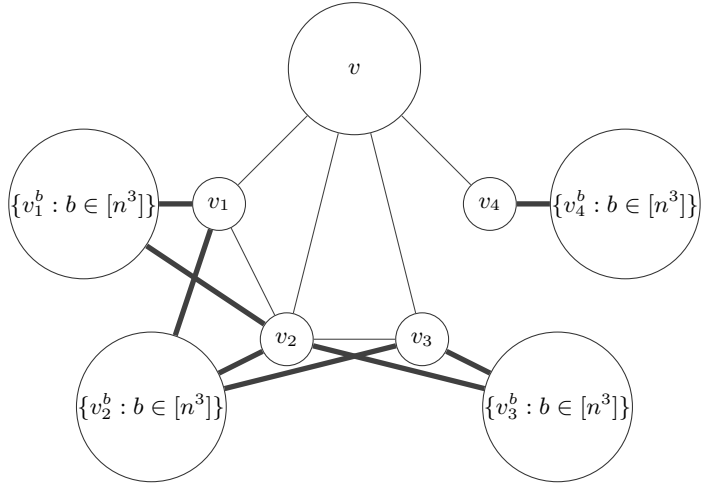

\begin{proof}
A vertex $v$ is \emph{dominated} by a set of vertices $S$ if it is adjacent to at least one vertex from $S$.
A \emph{dominating set} is a set of vertices which dominates all other vertices in the graph.
We describe a reduction from the following NP-hard problem~\cite{GJ79}.

\probDef
  {Dominating Set}
  {A graph $G$ and a budget $h$.}
  {Is there a dominating set of $G$ consisting of $h$ vertices?}

Given an instance of \textsc{Dominating Set},
we create an instance of \textsc{Unbalancedness},
as follows.
Let us denote the vertices of $G$ by $v_1, \ldots, v_n$
and refer to them as \emph{original vertices}.
For each original vertex $v_i$ ($i \in [n]$)
we create $n^3$ new vertices,
denoted by $v_i^b$ ($i \in [n]$, $b \in [n^3]$)
and refer to them as \emph{bag vertices};
we make each such bag a clique, that is,
for each $i \in [n]$, we add all edges $\{v_i^{b'}, v_i^{b''}\}$, for $b' \neq b'', b' \in [n^3], b'' \in [n^3]$.
We connect each original vertex $v_i$ to its own bag vertices~$v_i^b$
(that is, we add all edges $\{v_i, v_i^b\}$, for $b \in [n^3]$)
as well as to the bag vertices of its adjacent vertices
(that is, if the edge $\{v_i, v_j\}$ is present, then we add all edges $\{v_i, v_j^b\}$, for $b \in [n^3]$).
We create a new vertex, denoted by $v$,
and connect it to all the original vertices (that is, we add the edges $\{v, v_i\}$, for $i \in [n]$).
We set the number $k$ of facilities to $h + 1$ and the bound~$s$ to $n$.
This finishes the polynomial-time reduction.
An example of the construction is given in Figure~\ref{figure:reduction}.

Next we prove correctness.
Given a dominating set $u_1, \ldots, u_h$,
we place one facility on $v$ and the other $k - 1 = h$ facilities on the dominating set vertices $u_1, \ldots, u_h$.
The score of the facility placed on $v$ is at most $n + 1 - h \leq n$,
since it gets no score from the bag vertices and the dominating set vertices.
Thus, there exist a player whose score is strictly less than $s$.

For the other direction,
Consider a placement of $k$ facilities on the constructed graph such that there exists at least one player, say $P_1$, whose score is less than $s = n$.
We claim that $P_1$'s facility must be placed on $v$:
  if this is not the case,
  that is,
  if $P_1$ is placing a facility on either an original vertex or a bag vertex,
  then it has distance one to at least one bag of $n^3$ bag vertices;
  as a result, $P_1$'s score would be at least $n^2 > n$.

Following the last paragraph,
we have that $P_1$ is placing a facility on $v$.
Notice that, roughly speaking, our goal now is to reduce the score of $P_1$ to the minimum possible.
First, we claim that no player can place a facility on a bag vertex.
In contradiction,
let us assume some player $P_i$ places a facility on a bag vertex $v_i^b$,
for some $i \in [n]$ and $b \in [n^3]$.
Then,
if another facility is placed on its corresponding original vertex $v_i$,
then we pick an arbitrary original vertex with no facility placed on it,
and move $P_i$'s facility to this arbitrarily-chosen original vertex.
The score of $P_1$ will not increase as a result of this move.
Otherwise,
if no facility is placed on the original vertex $v_i$ corresponding to the vertex $v_i^b$,
then we move $P_i$'s facility to its corresponding original vertex $v_i$.
Again,
the score of $P_1$ will not increase as a result of this move.
Thus,
we can assume without loss of generality that no facility is placed on a bag vertex.

Thus,
we have that $P_1$ is placing a facility on $v$,
and all other players are placing facilities on original vertices.
Now,
if those facilities do not form a dominating set,
then at least one original vertex $v_i$ exists which is not adjacent to any facility.
The facility placed by $P_i$ on $v$ will have distance of $2$ to the bag vertices $v_i^b$ ($b \in [n^3]$) corresponding to the undominated vertex $v_i$,
and all other facilities will have distance of at least $2$ to these vertices.
Thus,
the score of $v$ will be at least $n^3 / k \geq n^2 > s$,
therefore,
we conclude that the other facilities are placed on vertices which form a dominating set.
~\qed\end{proof}

\subsection{Proof of Proposition~\ref{proposition unbalanced graph}}

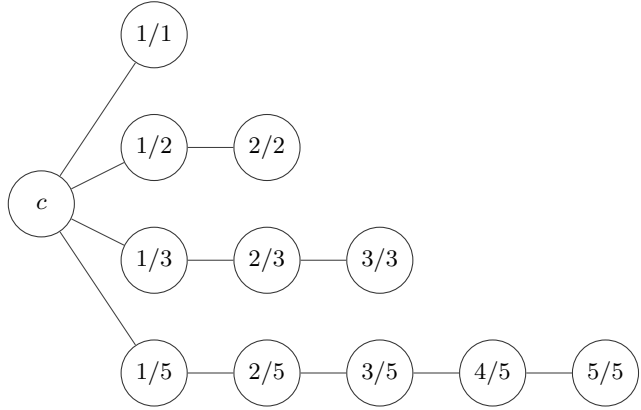
\begin{figure}[t]
    \center
    \begin{tikzpicture}[draw=black!75, scale=1.5,-]
      \tikzstyle{vertex}=[circle,draw=black!80,minimum size=25pt,inner sep=0pt]
	
	  \footnotesize

      \foreach [count=\i] \pos / \text in {
        {(0,1.5)}/$c$,
        {(1,3)}/${1/1}$,
        {(1,2)}/${1/2}$,
        {(2,2)}/${2/2}$,
        {(1,1)}/${1/3}$,
        {(2,1)}/${2/3}$,
        {(3,1)}/${3/3}$,
        {(1,0)}/${1/5}$,
        {(2,0)}/${2/5}$,
        {(3,0)}/${3/5}$,
        {(4,0)}/${4/5}$,
        {(5,0)}/${5/5}$}
      {
        \node[vertex] (V\i) at \pos {\text};
      }

      \foreach \i / \j in {1/2,1/3,3/4,1/5,5/6,6/7,1/8,8/9,9/10,10/11,11/12} {
        \path[] (V\i) edge (V\j);
      }
    \end{tikzpicture}
    \caption{A graph which is not $0$-balanced for any placement of two facilities.}\label{figure:unbalanced}
  \end{figure}

\begin{proposition}\label{proposition unbalanced graph}
  For any placement of two facilities,
  the graph depicted in Figure~\ref{figure:unbalanced} is not $0$-balanced.
\end{proposition}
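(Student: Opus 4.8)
The plan is to exploit the very rigid structure of the graph in Figure~\ref{figure:unbalanced}, which is a ``spider'': a center vertex $c$ with four pendant paths (legs) of lengths $1,2,3,5$, for a total of $n=12$ vertices. Since the two players' scores always sum to $n=12$, a placement is $0$-balanced if and only if both players score exactly $6$, so it suffices to show that no placement of two distinct facilities yields the score vector $(6,6)$. I would use the standard fact about distances in trees: for facilities at distinct vertices $u_1,u_2$, writing the unique $u_1$--$u_2$ path as $u_1=x_0,x_1,\dots,x_\ell=u_2$, every vertex $v$ has a unique closest path-vertex $x_{j(v)}$ and $d(v,u_1)-d(v,u_2)=2\,j(v)-\ell$; hence $v$ is claimed by $u_1$ if $j(v)<\ell/2$, by $u_2$ if $j(v)>\ell/2$, and is tied if $j(v)=\ell/2$. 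The argument then splits on the parity of $\ell$.

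If $\ell$ is odd there are no ties, and $\score(u_1)$ equals the number of vertices on the $u_1$-side of the single edge $\{x_{(\ell-1)/2},x_{(\ell+1)/2}\}$; that is, $\score(u_1)$ is the size of one of the two components obtained by deleting a single edge of the spider. A one-line enumeration shows that deleting any edge of this spider yields components of sizes $\{s,12-s\}$ with $s\in\{1,2,3,4,5\}$ (the smaller side is always a suffix of one leg). Since $6$ never occurs among these, $\score(u_1)\neq 6$, so the placement is not $0$-balanced.

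If $\ell$ is even, write $\ell=2m$. The tie set $T$ is exactly the component of $x_m$ after deleting the two path-edges $\{x_{m-1},x_m\}$ and $\{x_m,x_{m+1}\}$, and $\score(u_i)=|S_i|+|T|/2$, where $S_1$ and $S_2$ are the components of $u_1$ and $u_2$ after deleting $\{x_{m-1},x_m\}$ and $\{x_m,x_{m+1}\}$ respectively; hence the placement is balanced iff $|S_1|=|S_2|$. Now I would use that every vertex of the spider other than $c$ has degree at most $2$. If $x_m\neq c$, then $x_m$ has degree exactly $2$, both its edges lie on the path, so $T=\{x_m\}$ and $|S_1|+|S_2|=11$, which is odd, forcing $|S_1|\neq|S_2|$. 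If $x_m=c$, then since $m\geq 1$ the facilities $u_1,u_2$ lie on two distinct legs, and deleting the two path-edges at $c$ makes $S_1$ and $S_2$ equal to those two entire legs; as the leg lengths $1,2,3,5$ are pairwise distinct, $|S_1|\neq|S_2|$. In every case the placement fails to be $0$-balanced, which proves the proposition.

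The routine parts are the two small enumerations (edge-removal split sizes, and degrees of the spider), and the main thing to get right is the bookkeeping that translates the closest-path-vertex description into component-size statements — in particular, correctly identifying the tie set $T$, and checking that the case split on the parity of $\ell$ and on whether the path midpoint equals $c$ is exhaustive (it subsumes, as special cases, placements with one facility at $c$ and placements with both facilities on the same leg, where the path between them does not pass through $c$).
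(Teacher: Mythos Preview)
Your proof is correct and takes a genuinely different route from the paper's. The paper argues by a positional case split --- one facility at $c$; both facilities on the same branch; facilities on different branches at equal or unequal depth --- and in each case identifies which facility captures strictly more vertices, sometimes by an informal ``best position for the other player'' argument. You instead use the path $x_0,\dots,x_\ell$ between the two facilities and reduce everything to component sizes after edge deletions, splitting only on the parity of $\ell$ and on whether the midpoint $x_m$ equals $c$. This replaces the paper's ad hoc reasoning by two clean structural observations: no single edge of this spider separates it into two parts of size $6$, and (for even $\ell$) either the midpoint has degree $2$ so the two ``sides'' have odd total size, or the midpoint is $c$ and the sides are two full legs of distinct lengths. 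What your approach buys is that it requires no optimality reasoning and immediately generalizes: the same argument shows that any spider whose leg lengths are pairwise distinct and in which no edge deletion yields a component of size $n/2$ admits no $0$-balanced placement of two facilities. The paper's argument, on the other hand, is shorter to state and needs no bookkeeping about $j(v)$ or the tie set $T$.
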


\begin{proof}
Consider the graph depicted in Figure~\ref{figure:unbalanced},
composed of a center vertex $c$ and four branches:
  first branch, consisting of vertex $1/1$;
  second branch, consisting of vertices $1/2$, $2/2$;
  third branch, consisting of vertices $1/3$, $2/3$, and $3/3$,
  and fourth branch, consisting of vertices $1/5$, $2/5$, $3/5$, $4/5$, and $5/5$.
We argue that the graph depicted in Figure~\ref{figure:unbalanced} is not balanced for any placement of two facilities on it.
To this end,
let us denote one facility by $f_1$ and the other by $f_2$,
and consider the following case analysis.

\smallskip\noindent\textbf{$f_1$ is placed on $c$:}
in this case,
the best position for $f_2$ is ${1/5}$,
making $f_1$'s score strictly greater than that of $f_2$.

\smallskip\noindent\textbf{$f_1$ and $f_2$ are placed on the same branch:}
in this case,
the facility which is placed closer to $c$ has strictly larger than the other facility.

\smallskip\noindent\textbf{$f_1$ and $f_2$ are placed on different branches:}
we split this case into two subcases.
First,
if both facilities are placed such that they have the same distance to $c$,
then the facility which is placed on the longer branch wins.
Second,
if the facilities are placed such that one of them has smaller distance to $c$,
then,
without loss of generality,
assume that the distance between $f_1$ and $c$ is strictly smaller than the distance between $f_2$ and $c$.
Then,
the best placement for $f_2$ is when it is the closest to all the vertices of the longest path,
thus getting a score of $5$.
However,
even in this case $f_1$ has strictly larger score, specifically $7$.
~\qed\end{proof}

\end{document}